\newcommand{\email}[1]{\mbox{Email: \url{#1}}}
\newtheorem{theorem}{Theorem}[section]
\newtheorem{lemma}[theorem]{Lemma}
\newtheorem{corollary}[theorem]{Corollary}
\newtheorem{Definition}[theorem]{Definition}
\newtheorem{Example}[theorem]{Example}
\newtheorem{Remark}[theorem]{Remark}
\newenvironment{definition}{\begin{Definition}\begin{em}}{\end{em}\end{Definition}}
\newenvironment{example}{\begin{Example}\begin{em}}{\end{em}\end{Example}}
\newenvironment{remark}{\begin{Remark}\begin{em}}{\end{em}\end{Remark}}
\newenvironment{proof}{
	
	\smallskip
	
	\noindent
	{\em Proof.}}{
	
	\smallskip
	
}
\def\eqref#1{(\ref{#1})}
\def\tuple#1{\langle#1\rangle}
\newcommand{\mL}{\mathcal{L}}
\newcommand{\bL}{\mathbf{L}}
\newcommand{\mA}{\mathcal{A}}
\newcommand{\mAp}{{\mathcal{A}'\!}}
\newcommand{\mAdp}{{\mathcal{A}''\!}}
\newcommand{\myend}{\mbox{}\hfill{\small$\blacksquare$}}
\newcommand{\comment}[1]{}
\newcommand{\fand}{\varotimes}
\newcommand{\fto}{\Rightarrow}
\newcommand{\fequiv}{\Leftrightarrow}
\newcommand{\deltaA}{\delta^\mA}
\newcommand{\sigmaA}{\sigma^\mA}
\newcommand{\tauA}{\tau^\mA}
\newcommand{\deltaAp}{\delta^\mAp}
\newcommand{\sigmaAp}{\sigma^\mAp}
\newcommand{\tauAp}{\tau^\mAp}
\newcommand{\deltaAdp}{\delta^\mAdp}
\newcommand{\sigmaAdp}{\sigma^\mAdp}
\newcommand{\tauAdp}{\tau^\mAdp}
\newcommand{\normS}[3]{\|#1\|_{{#2} \fto {#3}}}
\newcommand{\normBS}[3]{\|#1\|_{{#2} \fequiv {#3}}}
\newcommand{\nZs}{\normS{\varphi}{\mA}{\mAp}}
\newcommand{\nZbs}{\normBS{\varphi}{\mA}{\mAp}}
\newcommand{\mF}{\mathcal{F}}
\newcommand{\mFs}{\mF_{\!\!_\to}(\Sigma,\mL)}
\newcommand{\mFbs}{\mF_{\!\!_\leftrightarrow}(\Sigma,\mL)}
\begin{document}
\sloppy
	
\title{Fuzzy Simulations and Bisimulations between Fuzzy Automata}

\author{Linh Anh Nguyen}

\affil{\small Institute of Informatics, University of Warsaw, Banacha 2, 02-097 Warsaw, Poland, \email{nguyen@mimuw.edu.pl}}

\affil{\small
	Faculty of Information Technology, Nguyen Tat Thanh University, Ho Chi Minh City, Vietnam
}

\date{}

\maketitle

\begin{abstract}
Simulations and bisimulations between two fuzzy automata over a complete residuated lattice were defined by {\'C}iri{\'c} et al.~(2012) as fuzzy relations between the sets of states of the automata. However, they act as a crisp relationship between the automata. In particular, if there exists a (forward) bisimulation between two fuzzy automata, then the fuzzy languages recognized by them are crisply equal. 
Approximate simulations and bisimulations introduced by Stanimirovi{\' c} et al.~(2020) aim at fuzzifying this phenomenon. However, they are defined only for fuzzy automata over a complete Heyting algebra and do not give the exact relationship between states of the automata. 
In this article, we introduce and study {\em fuzzy} simulations and bisimulations between fuzzy automata over a complete residuated lattice. These notions are novel and have good properties. They are defined for fuzzy automata over any complete residuated lattice. 
We prove that the fuzzy language recognized by a fuzzy automaton is fuzzily preserved by fuzzy simulations and fuzzily invariant under fuzzy bisimulations. We also prove that the notions of fuzzy simulation and bisimulation have the Hennessy-Milner properties, which are a logical characterization of the greatest fuzzy simulation or bisimulation between two fuzzy automata. 
In addition, we provide results showing that our notions of fuzzy simulation and bisimulation are more general and refined than the notions of simulation and bisimulation introduced by {\'C}iri{\'c} et al.\ and the notions of approximate simulation and bisimulation introduced by Stanimirovi{\' c} et al.

\medskip

\noindent {\em Keywords:} fuzzy simulation, fuzzy bisimulation, fuzzy automata, residuated lattice.
\end{abstract}


\section{Introduction}
\label{section:intro}

\comment{
TO DO:
\begin{itemize}
\item Rephrase II: B, C, D, E
\item \ldots
\end{itemize}
}

Simulation and bisimulation are well-known notions in computer science~\cite{vBenthem76,Park81,HennessyM85,Sangiorgi09}. They are used, among others, to compare the behaviors of labeled transition systems (LTSs) and specify the logical similarity or indiscernibility between states in Kripke models (see, e.g.~\cite{HennessyM85,BRV2001}). The largest auto-bisimulation of an LTS or a Kripke model is an equivalence relation. It can be exploited to minimize the considered system. The largest auto-bisimulation of an interpretation in a description logic can also be used for concept learning~\cite{LbRoughification}. 

Automata differ from LTSs in that they have initial and terminal states. They are generalized to fuzzy automata by allowing transitions and the sets of initial or terminal states to be fuzzy. 
In~\cite{CiricIDB12} {\'C}iri{\'c} et al.\ introduced two kinds of simulations (forward and backward) and four kinds of bisimulations (forward, backward, forward-backward and backward-forward) between fuzzy automata over a complete residuated lattice. They form pairs of dual notions. Among those kinds, forward simulations and bisimulations can be treated as the default. The work~\cite{CiricIDB12} concentrates on studying forward simulations and bisimulations that are uniform fuzzy relations and provides results on isomorphisms between factor fuzzy automata. 

Simulations and bisimulations introduced in~\cite{CiricIDB12} for fuzzy automata are fuzzy relations between the sets of states of two fuzzy automata. However, they act as a {\em crisp} relationship between the automata. In particular, if there exists a forward simulation (respectively, bisimulation) between fuzzy automata $\mA$ and $\mAp$, then the fuzzy language recognized by $\mA$ is crisply less than or equal (respectively, crisply equal) to the fuzzy language recognized by $\mAp$. 
In~\cite{SMC.20} Stanimirovi{\' c} et al.\ introduced approximate simulations and bisimulations between fuzzy automata with the aim to fuzzify this phenomenon. These notions are also fuzzy relations. However, they are defined only for fuzzy automata over a complete Heyting algebra and do not give the exact relationship between states of the automata. 

The motivation of this work is to introduce novel notions of simulation and bisimulation between fuzzy automata that are more refined than the ones defined in~\cite{CiricIDB12} and~\cite{SMC.20}. 

In this article, we introduce and study {\em fuzzy} simulations and bisimulations between fuzzy automata over a complete residuated lattice. We prove that the fuzzy language recognized by a fuzzy automaton is fuzzily preserved by fuzzy simulations (Theorem~\ref{theorem: HDFUI}) and fuzzily invariant under fuzzy bisimulations (Theorem~\ref{theorem: HDFUI2}). We also prove that the notions of fuzzy simulation and bisimulation have the Hennessy-Milner properties (Theorems~\ref{theorem: HGDJA} and~\ref{theorem: HGDJAt}), which are a logical characterization of the greatest fuzzy simulation or bisimulation between two fuzzy automata. 
Our notions of fuzzy simulation and bisimulation for fuzzy automata are more refined than the notions of forward simulation and bisimulation introduced in~\cite{CiricIDB12} and the notions of approximate forward simulation and bisimulation introduced in~\cite{SMC.20} in the following aspects:
\begin{itemize}
\item Every forward simulation (respectively, bisimulation) between two fuzzy automata is a fuzzy simulation (respectively, bisimulation) between them, but not vice versa. 
\item While there may not exist any forward simulation or bisimulation between two fuzzy automata, there always exist the greatest fuzzy simulation and bisimulation between them. 
\item While the fuzzy language recognized by a fuzzy automaton is crisply preserved by forward simulations and crisply invariant under forward bisimulations \cite[Theorem~5.3]{CiricIDB12}, it is fuzzily preserved by fuzzy simulations and fuzzily invariant under fuzzy bisimulations.

\item While approximate forward simulations and bisimulations~\cite{SMC.20} are defined only for fuzzy automata over a complete Heyting algebra, fuzzy simulations and bisimulations are defined for fuzzy automata over any complete residuated lattice. 

\item Given image-finite fuzzy automata $\mA$ and $\mAp$ over the G\"odel structure and an appropriate threshold~$\lambda$, the greatest $\lambda$-approximate forward simulation (respectively, bisimulation) between $\mA$ and $\mAp$ may not give the exact relationship between states of the automata (see Examples~\ref{example: JHDLS} and~\ref{example: JHHSK}), while the greatest fuzzy simulation (respectively, bisimulation) between $\mA$ and $\mAp$ always gives the exact relationship between states of the automata (as stated by the Hennessy-Milner properties of fuzzy simulations and bisimulations). 
\end{itemize}

The rest of this article is structured as follows. Section~\ref{section: prel} contains preliminaries. 
In Section~\ref{section: fs-4-fa} (respectively, Section~\ref{section: f-s-bs}), we define fuzzy simulations (respectively, bisimulations) between fuzzy automata and present our results on them. Section~\ref{sec: related work} is a discussion on related work. We give concluding remarks in Section~\ref{sec: conc}.


\section{Preliminaries}
\label{section: prel}

This section recalls basic definitions about residuated lattices, fuzzy sets and relations, fuzzy automata, forward simulations and bisimulations, as well as approximate forward simulations and bisimulations between fuzzy automata.

\subsection{Residuated Lattices}

A {\em residuated lattice} \cite{Hajek1998,Belohlavek2002} is an algebra $\mL = \tuple{L$, $\leq$, $\fand$, $\fto$, $0$, $1}$ such that
\begin{itemize}
\item $\tuple{L, \leq, 0, 1}$ is a lattice with the smallest element 0 and the greatest element 1,
\item $\tuple{L, \fand, 1}$ is a commutative monoid, 
\item for every $a, b, c \in L$,
\begin{equation}
a \fand b \leq c \ \ \textrm{iff}\ \ a \leq (b \fto c). \label{fop: GDJSK 00} 
\end{equation}
\end{itemize}

Given a residuated lattice \mbox{$\mL = \tuple{L, \leq, \fand, \fto, 0, 1}$}, let $\lor$ and $\land$ denote the corresponding {\em meet} and {\em join} operators. We write $a \fequiv b$ to denote \mbox{$(a \fto b) \land (b \fto a)$}. For $A \subseteq L$, by $\bigvee\!A$ and $\bigwedge\!A$ we denote the supremum and infimum of $A$, respectively, if they exist. Similarly, for $A = \{a_i \mid i \in I\} \subseteq L$, by $\bigvee_{i \in I} a_i$ and $\bigwedge_{i \in I} a_i$ we denote $\bigvee\!A$ and $\bigwedge\!A$, respectively, if they exist. We assume that $\fand$ and $\land$ bind stronger than $\lor$, which binds stronger than~$\fto$ and~$\fequiv$. 

A residuated lattice $\mL = \tuple{L, \leq, \fand, \fto, 0, 1}$ is {\em complete} if the lattice \mbox{$\tuple{L, \leq, 0, 1}$} is complete. It is a {\em Heyting algebra} if $\fand = \land$. 
We say that $\fand$ is {\em continuous} (with respect to infima) if, for every $a \in L$ and $B \subseteq L$, 
\[ a \fand {\textstyle\bigwedge} B = \bigwedge_{b \in B}\!(a \fand b). \]


\begin{lemma}(cf.\ \cite{Hajek1998,Belohlavek2002,NguyenFSS2021})\label{lemma: JHFJW}
	Let $\mL = \tuple{L, \leq, \fand, \fto, 0, 1}$ be a residuated lattice.
	The following properties hold for all $a,a',b,b',c \in L$:
	\begin{eqnarray}
	\!\!\!\!\!\!\!\!\!\! a \leq a' \textrm{ and } b \leq b' & \!\!\textrm{implies}\!\! & a \fand b \leq a' \fand b' \label{fop: GDJSK 10}\\
	\!\!\!\!\!\!\!\!\!\! a' \leq a \textrm{ and } b \leq b' & \!\!\textrm{implies}\!\! & (a \fto b) \leq (a' \fto b') \label{fop: GDJSK 20}\\
	\!\!\!\!\!\!\!\!\!\! a \leq b & \textrm{iff} & (a \fto b) = 1 \label{fop: GDJSK 30}
	\end{eqnarray}
	\begin{eqnarray}
	\!\!\!\!\!\!\!\!\!\! a \fand 0 & = & 0 \label{fop: GDJSK 40} \\
	\!\!\!\!\!\!\!\!\!\! a \fand (a \fto b) & \leq & b \label{fop: GDJSK 60} \\
	\!\!\!\!\!\!\!\!\!\! a \fto (b \fto c) & \leq & a \fand b \,\fto\, c \label{fop: GDJSK 100} \\
	\!\!\!\!\!\!\!\!\!\! a \fand (b \fto c) & \leq & b \,\fto\, a \fand c \label{fop: GDJSK 80a} \\
	\!\!\!\!\!\!\!\!\!\! a \fequiv b & \leq & (c \fequiv a) \fequiv (c \fequiv b). \label{fop: GDJSK 150b}
	\end{eqnarray}
	If $\mL$ is complete, then the following properties hold for all $a,b \in L$ and $A,B \subseteq L$:
	\begin{eqnarray}
	a \fand \textstyle\bigvee\!B & = & \bigvee_{b \in B} (a \fand b) \label{fop: GDJSK 155} \\
	(\textstyle\bigvee\!A) \fto b & = & \bigwedge_{a \in A} (a \fto b). \label{fop: GDJSK 210}
	\end{eqnarray}
\end{lemma}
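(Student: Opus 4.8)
The plan is to derive every item from the single adjunction property \eqref{fop: GDJSK 00}, together with the commutativity and associativity of $\fand$ and the fact that $1$ is its unit. Each item is standard (cf.\ \cite{Hajek1998,Belohlavek2002}), so the work is essentially bookkeeping; the only real care needed is to establish the items in a dependency-respecting order and, for \eqref{fop: GDJSK 150b}, to keep track of which projection of a nested $\fequiv$ is used where.

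First I would record the fundamental instance of \eqref{fop: GDJSK 00} obtained by feeding the reflexivity $a \fto b \leq a \fto b$ into it, which yields $(a \fto b) \fand a \leq b$, i.e.\ \eqref{fop: GDJSK 60}. Next I would prove \eqref{fop: GDJSK 10}: from $a' \fand b' \leq a' \fand b'$ and \eqref{fop: GDJSK 00} we get $a' \leq (b' \fto a' \fand b')$, so $a \leq a'$ gives $a \fand b' \leq a' \fand b'$; a symmetric step in the second argument (using commutativity) gives $a \fand b \leq a \fand b'$, and composition gives the claim. Then \eqref{fop: GDJSK 20} follows by combining \eqref{fop: GDJSK 10}, \eqref{fop: GDJSK 60} and \eqref{fop: GDJSK 00}, since $(a \fto b) \fand a' \leq (a \fto b) \fand a \leq b \leq b'$. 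Item \eqref{fop: GDJSK 30} is immediate from \eqref{fop: GDJSK 00} applied to the unit ($a \leq b$ iff $1 \fand a \leq b$ iff $1 \leq (a \fto b)$), and \eqref{fop: GDJSK 40} from $0 \leq (a \fto 0)$ via \eqref{fop: GDJSK 00} and minimality of $0$.

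For \eqref{fop: GDJSK 100} and \eqref{fop: GDJSK 80a} I would use \eqref{fop: GDJSK 00} to turn the goal into an inequality with a $\fand$-term on the left, then collapse it by two applications of \eqref{fop: GDJSK 60} plus associativity, commutativity and \eqref{fop: GDJSK 10}; e.g.\ for \eqref{fop: GDJSK 100}, $(a \fto (b \fto c)) \fand a \fand b \leq (b \fto c) \fand b \leq c$. The one genuinely fiddly item is \eqref{fop: GDJSK 150b}. I would first prove the auxiliary transitivity inequality $(x \fto y) \fand (y \fto z) \leq (x \fto z)$ (again by \eqref{fop: GDJSK 00} and two uses of \eqref{fop: GDJSK 60}), and note that, $\fequiv$ being a meet, $u \fequiv v \leq (u \fto v)$ and $u \fequiv v \leq (v \fto u)$. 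By \eqref{fop: GDJSK 150b}'s symmetry in $a,b$ it suffices to bound $(a \fequiv b)$ by $(c \fequiv a) \fto (c \fequiv b)$, which by \eqref{fop: GDJSK 00} reduces to showing $(a \fequiv b) \fand (c \fequiv a)$ lies below both $(c \fto b)$ and $(b \fto c)$; the first uses the projections to $(a \fto b)$ and $(c \fto a)$ and transitivity, the second uses $(b \fto a)$ and $(a \fto c)$ and transitivity. Choosing the right projection in each of the resulting (four, after the symmetry) inequalities is the main place to be careful.

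Finally, the completeness items. For \eqref{fop: GDJSK 155}, the inequality $\bigvee_{b}(a \fand b) \leq a \fand \bigvee B$ is just monotonicity \eqref{fop: GDJSK 10}; for the converse, put $s = \bigvee_{b}(a \fand b)$, observe $a \fand b \leq s$ gives $b \leq (a \fto s)$ by \eqref{fop: GDJSK 00} for every $b \in B$, hence $\bigvee B \leq (a \fto s)$, and so $a \fand \bigvee B \leq s$ again by \eqref{fop: GDJSK 00}. For \eqref{fop: GDJSK 210}, the inequality $(\bigvee A \fto b) \leq \bigwedge_{a}(a \fto b)$ is \eqref{fop: GDJSK 20}; for the converse, put $t = \bigwedge_{a}(a \fto b)$, and from $t \fand a \leq b$ for all $a$ together with \eqref{fop: GDJSK 155} get $t \fand \bigvee A = \bigvee_{a}(t \fand a) \leq b$, whence $t \leq (\bigvee A \fto b)$ by \eqref{fop: GDJSK 00}. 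I expect no difficulty here beyond what is already dispatched in handling \eqref{fop: GDJSK 150b}.
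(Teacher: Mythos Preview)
Your proposal is correct. For the three items the paper actually proves in detail---\eqref{fop: GDJSK 80a}, \eqref{fop: GDJSK 155} and \eqref{fop: GDJSK 210}---your arguments are essentially identical to the paper's (reduce via the adjunction \eqref{fop: GDJSK 00}, then use monotonicity and \eqref{fop: GDJSK 60}; for \eqref{fop: GDJSK 210} both you and the paper invoke the already-established \eqref{fop: GDJSK 155}). The only difference is scope: the paper dispatches \eqref{fop: GDJSK 10}--\eqref{fop: GDJSK 100} and \eqref{fop: GDJSK 150b} by citation to~\cite{NguyenFSS2021}, whereas you supply self-contained derivations (including the auxiliary transitivity $(x \fto y) \fand (y \fto z) \leq (x \fto z)$ for \eqref{fop: GDJSK 150b}); these are all standard and your dependency order is sound.
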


\begin{proof}
The proofs of~\eqref{fop: GDJSK 10}--\eqref{fop: GDJSK 100} are available in~\cite{NguyenFSS2021}. The assertion~\eqref{fop: GDJSK 150b} follows directly from the assertions (16), (18) and (19) of~\cite{NguyenFSS2021}. We present below proofs for the remaining assertions \eqref{fop: GDJSK 80a}, \eqref{fop: GDJSK 155} and \eqref{fop: GDJSK 210} although such proofs may be found in the literature. 

Consider the assertion~\eqref{fop: GDJSK 80a}. 
By~\eqref{fop: GDJSK 00}, $(b \fto c) \fand b \leq c$. By~\eqref{fop: GDJSK 10}, it follows that $a \fand (b \fto c) \fand b \leq a \fand c$, which implies~\eqref{fop: GDJSK 80a} by using~\eqref{fop: GDJSK 00}. 

Consider the assertion~\eqref{fop: GDJSK 155}. By~\eqref{fop: GDJSK 10}, the RHS is clearly less than or equal to the LHS. It remains to prove the converse. By~\eqref{fop: GDJSK 00}, for every $a,b \in L$, $b \leq (a \fto a \fand b)$. By~\eqref{fop: GDJSK 20}, it follows that, for every $a \in L$ and $b \in B$, $b \leq (a \fto \bigvee_{b' \in B}(a \fand b'))$. Hence, $\bigvee\!B \leq (a \fto \bigvee_{b \in B}(a \fand b))$. By~\eqref{fop: GDJSK 00}, it follows that $a \fand \bigvee\!B \leq \bigvee_{b \in B}(a \fand b)$. 

Consider the assertion~\eqref{fop: GDJSK 210}. By~\eqref{fop: GDJSK 20}, the LHS is clearly less than or equal to the RHS. It remains to prove the converse. By~\eqref{fop: GDJSK 10} and~\eqref{fop: GDJSK 60}, for every $a \in A$ and $b \in L$, $a \fand \bigwedge_{a' \in A} (a' \fto b) \leq b$. By~\eqref{fop: GDJSK 155}, it follows that $(\bigvee\!A) \fand \bigwedge_{a \in A} (a \fto b) \leq b$. Consequently, by~\eqref{fop: GDJSK 00}, $\bigwedge_{a \in A} (a \fto b) \leq ((\bigvee\!A) \fto b)$.
\myend
\end{proof}

\begin{example}
When $L = [0,1]$, the most well-known operators $\fand$ are the t-norms named after G\"odel, {\L}ukasiewicz and product. They are specified below together with the corresponding residua~($\fto$).

\medskip

\begin{center}
	\begin{tabular}{|c|c|c|c|}
		\hline
		& G\"odel & {\L}ukasiewicz & Product \\
		\hline
		$a \fand b$ & $\min\{a,b\}$ & $\max\{0, a+b-1\}$ & $a \cdot b$ \\
		\hline
		$a \fto b$ 
		& 
		\(
		\!\!\left\{
		\!\!\!\begin{array}{ll}
		1 & \!\textrm{if $a \leq b$} \\ 
		b & \!\textrm{otherwise}
		\end{array}\!\!\!
		\right.
		\)
		& $\min\{1, 1 - a + b\}$
		& 
		\(
		\!\!\left\{
		\!\!\!\begin{array}{ll}
		1 & \!\textrm{if $a \leq b$} \\ 
		b/a & \!\textrm{otherwise}
		\end{array}\!\!\!
		\right.
		\)	
		\\ \hline
	\end{tabular}
\end{center}

\medskip

All of the above specified t-norms~$\fand$ are continuous. The corresponding residuated lattices are linear and complete. 
\myend
\end{example}

From now on, let $\mL = \tuple{L, \leq, \fand, \fto, 0, 1}$ be an arbitrary complete residuated lattice. 

\subsection{Fuzzy Sets}

Given a set $X$, a function $f: X \to L$ is called a {\em fuzzy set}, as well as a {\em fuzzy subset} of $X$. If $f$ is a fuzzy subset of $X$ and $x \in X$, then $f(x)$ means the fuzzy degree in which $x$ belongs to the subset. The {\em support} of a fuzzy set $f: X \to L$ is the set $\{x \in X \mid f(x) > 0\}$. A fuzzy set is said to be {\em empty} if its support is empty. 

For $\{x_1,\ldots,x_n\} \subseteq X$ and $\{a_1,\ldots,a_n\} \subseteq L$, we write $\{x_1:a_1$, \ldots, $x_n:a_n\}$ to denote the fuzzy subset $f$ of $X$ such that $f(x_i) = a_i$ for $1 \leq i \leq n$ and $f(x) = 0$ for $x \in X \setminus \{x_1,\ldots,x_n\}$. 
Similarly, given a set $I$ of indices, $\{x_i \mid i \in I\} \subseteq X$ and $\{a_i \mid i \in I\} \subseteq L$, we write $\{x_i:a_i \mid i \in I\}$ to denote the fuzzy subset $f$ of $X$ such that $f(x_i) = a_i$ for $i \in I$ and $f(x) = 0$ for $x \in X \setminus \{x_i \mid i \in I\}$. 

Given fuzzy subsets $f$ and $g$ of $X$, we write $f \leq g$ to denote that $f(x) \leq g(x)$ for all $x \in X$. If $f \leq g$, then we say that $g$ is greater than or equal to~$f$. We write $f < g$ to denote that $f \leq g$ and $f \neq g$. 
The {\em fuzzy degree of that $f$ is a subset of $g$} is denoted by $S(f,g)$ and defined as follows
\[ S(f,g) = \bigwedge_{x \in X} (f(x) \fto g(x)). \] 
The {\em fuzzy degree of that $f$ is equal to $g$} is denoted by $E(f,g)$ and defined as follows 
\[ E(f,g) = \bigwedge_{x \in X} (f(x) \fequiv g(x)). \] 

A fuzzy subset $\varphi$ of $X \times Y$ is called a {\em fuzzy relation} between $X$ and $Y$. A fuzzy relation $\varphi$ between $X$ and itself is called a fuzzy relation on~$X$. The {\em identity relation} on $X$ is the fuzzy relation $id_X : X \times X \to L$ defined as $\{\tuple{x,x}\!:\!1 \mid x \in X\}$. 

Given $\varphi: X \times Y \to L$, the converse \mbox{$\varphi^{-1} : Y \times X \to L$} of $\varphi$ is defined by $\varphi^{-1}(y,x) = \varphi(x,y)$.

The {\em composition} of fuzzy relations \mbox{$\varphi: X \times Y \to L$} and \mbox{$\psi: Y \times Z \to L$}, denoted by $\varphi \circ \psi$, is defined to be the fuzzy relation between $X$ and $Z$ such that 
\[ (\varphi \circ \psi)(x,z) = \bigvee_{y \in Y} (\varphi(x,y) \fand \psi(y,z)). \] 

Observe that $(\varphi \circ \psi)^{-1} = \psi^{-1} \circ \varphi^{-1}$. 

Given fuzzy sets $f: X \to L$, $g: Y \to L$ and $\varphi: X \times Y \to L$, we define $(f \circ \varphi) : Y \to L$ and $(\varphi \circ g) : X \to L$ to be the fuzzy sets such that
\begin{eqnarray*}
	(f \circ \varphi)(y) & = & \bigvee_{x \in X} (f(x) \fand \varphi(x,y)) \\
	(\varphi \circ g)(x) & = & \bigvee_{y \in Y} (\varphi(x,y) \fand g(y)).
\end{eqnarray*}
Note that the composition operator $\circ$ is associative. 

Given a fuzzy set $f : X \to L$ and $\lambda \in L$, by $\lambda \land f$ we denote the fuzzy subset of $X$ defined as follows: $(\lambda \land f)(x) = \lambda \land f(x)$, for $x \in X$. 

Let $\Phi$ be a set of fuzzy relations between $X$ and $Y$. The fuzzy relation $\bigcup\Phi$ between $X$ and $Y$ is specified by: 
\[ ({\textstyle\bigcup}\Phi)(x,y) = \bigvee_{\varphi \in \Phi}\!\varphi(x,y). \] 
We write $\varphi \cup \psi$ to denote $\bigcup\{\varphi,\psi\}$. 

A fuzzy relation $\varphi$ on $X$ is 
{\em reflexive} if $id_X \leq \varphi$, 
{\em symmetric} if $\varphi = \varphi^{-1}$, 
and {\em transitive} if $\varphi \circ \varphi \leq \varphi$. 
It is a {\em fuzzy pre-order} if it is reflexive and transitive. 
It is a {\em fuzzy equivalence relation} if it is reflexive, symmetric and transitive. 


\subsection{Fuzzy Automata}

A {\em fuzzy automaton} over an alphabet $\Sigma$ (and $\mL$) is a tuple $\mA = \tuple{A, \deltaA, \sigmaA, \tauA}$, where $A$ is a non-empty set of states, $\delta^\mA : A \times \Sigma \times A \to L$ is the fuzzy transition function, $\sigmaA : A \to L$ is the fuzzy set of initial states, and $\tauA : A \to L$ is the fuzzy set of terminal states.  
For $s \in \Sigma$, we write $\deltaA_s$ to denote the fuzzy relation on $A$ such that $\deltaA_s(x,y) = \deltaA(x,s,y)$. 

A fuzzy automaton $\mA = \tuple{A, \deltaA, \sigmaA, \tauA}$ is {\em image-finite} if the support of $\sigmaA$ is finite and, for every $s \in \Sigma$ and $x \in A$, the set $\{y \in A \mid \deltaA_s(x,y) > 0\}$ is finite.

A \emph{fuzzy language} over an alphabet $\Sigma$ (and $\mL$) is a fuzzy subset of $\Sigma^*$. The \emph{fuzzy language recognized by a fuzzy automaton} $\mA = \tuple{A,\deltaA,\sigmaA,\tauA}$ is the fuzzy language $\bL(\mA)$ over $\Sigma$ such that:
\[ \bL(\mA)(s_1 s_2 \ldots s_n)=\sigmaA\circ\deltaA_{s_1}\circ\deltaA_{s_2}\circ\ldots\circ\deltaA_{s_n}\circ\tauA, \]
for $n \geq 0$ and $s_1,\ldots,s_n \in \Sigma$. 
Given $x \in A$, by $\bL(\mA,x)$ we denote $\bL(\mA_x)$, where $\mA_x$ is the fuzzy automaton that differs from $\mA$ only in that $\sigma^{\mA_x} = \{x:1\}$. 

From now on, if not stated otherwise, let $\mA = \tuple{A, \deltaA, \sigmaA, \tauA}$ and $\mAp = \tuple{A', \deltaAp, \sigmaAp, \tauAp}$ be arbitrary fuzzy automata over an alphabet~$\Sigma$. 

\subsection{Simulations and Bisimulations between Fuzzy Automata}

A~{\em simulation} (called ``forward simulation'' in~\cite{CiricIDB12}) between fuzzy automata $\mA$ and $\mAp$ is a fuzzy relation \mbox{$\varphi: A \times A' \to L$} satisfying the following conditions for all $s \in \Sigma$:
\begin{eqnarray}
\sigma^\mA & \leq & \sigmaAp \circ \varphi^{-1} \label{eq: HFHAJ 1} \\
\varphi^{-1} \circ \deltaA_s & \leq & \deltaAp_s \circ \varphi^{-1} \label{eq: HFHAJ 2} \\
\varphi^{-1} \circ \tauA & \leq & \tauAp \label{eq: HFHAJ 3}.
\end{eqnarray}

A~{\em bisimulation} (called ``forward bisimulation'' in~\cite{CiricIDB12}) between fuzzy automata $\mA$ and $\mAp$ is a fuzzy relation \mbox{$\varphi: A \times A' \to L$} such that $\varphi$ is a simulation between $\mA$ and $\mAp$ and $\varphi^{-1}$ is a simulation between $\mAp$ and $\mA$, i.e., $\varphi$ satisfies the above conditions \eqref{eq: HFHAJ 1}--\eqref{eq: HFHAJ 3} as well as the following ones for all $s \in \Sigma$:
\begin{eqnarray}
\sigmaAp & \leq & \sigmaA \circ \varphi \label{eq: HFHAJ 4} \\
\varphi \circ \deltaAp_s & \leq & \deltaA_s \circ \varphi \label{eq: HFHAJ 5} \\
\varphi \circ \tauAp & \leq & \tauA. \label{eq: HFHAJ 6} 
\end{eqnarray}

An~{\em auto-simulation} (respectively, {\em auto-bisimulation}) of a fuzzy automaton $\mA$ is a simulation (respectively, bisimulation) between $\mA$ and itself.

\subsection{Approximate Simulations and Bisimulations}

Approximate simulations and bisimulations were introduced by Stanimirovi{\' c} et al.~\cite{SMC.20} for fuzzy automata over a complete Heyting algebra. In this subsection, let $\mL = \tuple{L, \leq, \fand, \fto, 0, 1}$ be an arbitrary complete Heyting algebra and let $\lambda \in L$. 

A~{\em $\lambda$-approximate simulation} (called ``$\lambda$-approximate forward simulation'' in~\cite{SMC.20}) between fuzzy automata $\mA$ and $\mAp$ is a fuzzy relation \mbox{$\varphi: A \times A' \to L$} satisfying the following conditions for all $s \in \Sigma$:
\begin{eqnarray}
\lambda & \leq & S(\sigma^\mA, \sigmaAp \circ \varphi^{-1}) \label{eq: A-HFHAJ 1} \\
\lambda & \leq & S(\varphi^{-1} \circ \deltaA_s, \deltaAp_s \circ \varphi^{-1}) \label{eq: A-HFHAJ 2} \\
\lambda & \leq & S(\varphi^{-1} \circ \tauA, \tauAp) \label{eq: A-HFHAJ 3}.
\end{eqnarray}

A~{\em $\lambda$-approximate bisimulation} (called ``$\lambda$-approximate forward bisimulation'' in~\cite{SMC.20}) between fuzzy automata $\mA$ and $\mAp$ is a fuzzy relation \mbox{$\varphi: A \times A' \to L$} such that $\varphi$ is a $\lambda$-approximate simulation between $\mA$ and $\mAp$ and $\varphi^{-1}$ is a $\lambda$-approximate simulation between $\mAp$ and $\mA$, i.e., $\varphi$ satisfies the above conditions \eqref{eq: A-HFHAJ 1}--\eqref{eq: A-HFHAJ 3} as well as the following ones for all $s \in \Sigma$:
\begin{eqnarray}
\lambda & \leq & S(\sigmaAp, \sigmaA \circ \varphi) \label{eq: A-HFHAJ 4} \\
\lambda & \leq & S(\varphi \circ \deltaAp_s, \deltaA_s \circ \varphi) \label{eq: A-HFHAJ 5} \\
\lambda & \leq & S(\varphi \circ \tauAp, \tauA). \label{eq: A-HFHAJ 6} 
\end{eqnarray}


\section{Fuzzy Simulations between Fuzzy Automata}
\label{section: fs-4-fa}

In this section, we define and study fuzzy simulations between fuzzy automata. Apart from basic properties, we also present preservation results and the Hennessy-Milner property of such simulations. 

\begin{definition}
A~{\em fuzzy simulation} between fuzzy automata $\mA$ and $\mAp$ is a fuzzy relation \mbox{$\varphi: A \times A' \to L$} satisfying the conditions \eqref{eq: HFHAJ 2} and \eqref{eq: HFHAJ 3} for all $s \in \Sigma$. 
A~{\em fuzzy auto-simulation} of a fuzzy automaton $\mA$ is a fuzzy simulation between $\mA$ and itself. 
The {\em norm} of a fuzzy simulation $\varphi$ between fuzzy automata $\mA$ and $\mAp$, denoted by $\nZs$, is defined to be $S(\sigma^\mA, \sigmaAp \circ \varphi^{-1})$. 
\myend
\end{definition}

\begin{remark}\label{remark: HHDFS} 
The following properties are consequences of the definition. 
\begin{enumerate}
\item Every simulation between $\mA$ and $\mAp$ is also a fuzzy simulation between $\mA$ and $\mAp$, but not vice versa. 
If $\varphi$ is a simulation between $\mA$ and $\mAp$, then $\nZs = 1$. 
\item The empty fuzzy relation between $A$ and $A'$ is a fuzzy simulation between $\mA$ and $\mAp$. 
\item The identity relation $id_A$ is a fuzzy auto-simulation of~$\mA$. 
\end{enumerate}
\end{remark}

\begin{figure}
\begin{center}
\begin{tikzpicture}[->,>=stealth,auto]
\node (S) {$\mA$};
\node (u) [node distance=0.8cm, below of=S] {$u:\{\sigma:0.7\}$};
\node (bu) [node distance=2cm, below of=u] {};		
\node (v) [node distance=1.1cm, left of=bu] {$v: \{\tau:0.6\}$};
\node (w) [node distance=1.1cm, right of=bu] {$w: \{\tau:0.7\}$};
\draw (u) to node[left]{0.5} (v);
\draw (u) to node[right]{0.8} (w);
\node (Sp) [node distance=6.7cm, right of=S] {$\mAp$};
\node (up) [node distance=0.8cm, below of=Sp] {$u':\{\sigma:0.6\}$};
\node (bup) [node distance=2cm, below of=up] {};		
\node (vp) [node distance=1.1cm, left of=bup] {$v': \{\tau:0.6\}$};
\node (wp) [node distance=1.1cm, right of=bup] {$w': \{\tau:0.7\}$};
\draw (up) to node[left]{0.8} (vp);
\draw (up) to node[right]{0.7} (wp);
\end{tikzpicture}
\end{center}
\caption{An illustration for Examples~\ref{example: HGDSJ}, \ref{example: HFUWH}, \ref{example: JHDLS}, \ref{example: HFBBA}, \ref{example: KSNAO} and \ref{example: JHHSK}.\label{fig: HFKWS}}
\end{figure}

\begin{example}\label{example: HGDSJ}
Let $L = [0,1]$ and let $\fand$ be the G\"odel t-norm. Let $\mA$ and $\mAp$ be the finite fuzzy automata over $\Sigma = \{s\}$ that are specified below and illustrated in Fig.~\ref{fig: HFKWS}.
\begin{itemize}
\item $A = \{u,v,w\}$, $\sigmaA = \{u\!:\!0.7\}$, $\tauA = \{v\!:\!0.6$, $w\!:\!0.7\}$, 
$\deltaA_s = \{\tuple{u,v}\!:\!0.5$, $\tuple{u,w}\!:\!0.8\}$;
\item $A' = \{u',v',w'\}$, $\sigmaAp = \{u'\!:\!0.6\}$, $\tauAp = \{v'\!:\!0.6$, $w'\!:\!0.7\}$, 
$\deltaAp_s = \{\tuple{u',v'}\!:\!0.8$, $\tuple{u',w'}\!:\!0.7\}$.
\end{itemize}

Let $\varphi: A \times A' \to L$ be the fuzzy relation $\{\tuple{u,u'}\!:\!0.7$, $\tuple{v,v'}\!:\!1$, $\tuple{v,w'}\!:\!1$, $\tuple{w,v'}\!:\!0.6$, $\tuple{w,w'}\!:\!1\}$. 
It is easy to check that, for every $x' \in A'$ and $y \in A$,  
\begin{eqnarray*}
(\varphi^{-1} \circ \deltaA_s)(x',y) & \leq & (\deltaAp_s \circ \varphi^{-1})(x',y) \\
(\varphi^{-1} \circ \tauA)(x') & \leq & \tauAp(x').
\end{eqnarray*}
That is, $\varphi$ is a fuzzy simulation between $\mA$ and $\mAp$. 
Since $\sigmaA = \{u\!:\!0.7\}$ and $(\sigmaAp \circ \varphi^{-1})(u) = 0.6$, we have $\nZs = (0.7 \fto 0.6) = 0.6$. 
Let $\psi$ is the greatest fuzzy simulation between $\mA$ and $\mAp$. Thus, $\psi \geq \varphi$. 
\begin{itemize}
\item Since $(\tauA(w) \fto \tauAp(v')) = (0.7 \fto 0.6) = 0.6$, by~\eqref{eq: HFHAJ 3} for~$\psi$ (instead of $\varphi$), we must have $\psi(w,v') \leq 0.6$. 
\item Since $\tauA(v) > 0$ and $\tauA(w) > 0$, while $\tauAp(u') = 0$, by~\eqref{eq: HFHAJ 3} for~$\psi$, we must have $\psi(v,u') = \psi(w,u') = 0$. 
\item Since $\deltaA_s(u,v) > 0$ and $(\deltaAp_s \circ \psi^{-1})(v',v) = 0$, by~\eqref{eq: HFHAJ 2} for~$\psi$, we must have $\psi(u,v') = 0$.   

\item Since $\deltaA_s(u,w) > 0$ and $(\deltaAp_s \circ \psi^{-1})(w',w) = 0$, by~\eqref{eq: HFHAJ 2} for~$\psi$, we must have $\psi(u,w') = 0$.   

\item Since $\deltaA_s(u,w) > 0.7$ and $(\deltaAp_s \circ \psi^{-1})(u',w) = 0.7$, by~\eqref{eq: HFHAJ 2} for~$\psi$, we must have $\psi(u,u') \leq 0.7$.   
\end{itemize}
Therefore, $\psi \leq \varphi$. This implies that $\varphi$ is the greatest fuzzy simulation between $\mA$ and $\mAp$. 
\myend
\end{example}

\begin{example}\label{example: HFUWH}
Let $L = [0,1]$ and let $\fand$ be the product t-norm. Let $\Sigma$, $\mA$ and $\mAp$ be as in Example~\ref{example: HGDSJ}. 
In a similar way as done in Example~\ref{example: HGDSJ}, it can be checked that the fuzzy relation $\varphi = \{\tuple{u,u'}\!:\!7/8$, $\tuple{v,v'}\!:\!1$, $\tuple{v,w'}\!:\!1$, $\tuple{w,v'}\!:\!6/7$, $\tuple{w,w'}\!:\!1\}$ is the greatest fuzzy simulation between $\mA$ and~$\mAp$, with $\nZs = 0.75$.
\myend
\end{example}

\subsection{Basic Properties}

As basic properties of fuzzy simulations between fuzzy automata, we have the following theorem as well as results about the relationship between the notion of fuzzy simulation and the notions of simulation~\cite{CiricIDB12} and approximate simulation~\cite{SMC.20}.

\begin{theorem}\label{theorem: HFHSA}
Let $\mA$, $\mAp$ and $\mAdp$ be fuzzy automata. 
\begin{enumerate}
\item If $\varphi$ and $\psi$ are fuzzy simulations between $\mA$ and $\mAp$ and $\varphi \leq \psi$, then $\normS{\varphi}{\mA}{\mAp} \leq \normS{\psi}{\mA}{\mAp}$.
\item If $\varphi$ is a fuzzy simulation between $\mA$ and $\mAp$ and $\psi$ is a fuzzy simulation between $\mAp$ and $\mAdp$, then $\varphi \circ \psi$ is a fuzzy simulation between $\mA$ and $\mAdp$ and 
\begin{equation}\label{eq: DJHSD}
\normS{\varphi}{\mA}{\mAp} \fand \normS{\psi}{\mAp}{\mAdp} \leq \normS{\varphi \circ \psi}{\mA}{\mAdp}.
\end{equation}
\item If $\Phi$ is a set of fuzzy simulations between $\mA$ and $\mAp$, then $\bigcup\Phi$ is also a fuzzy simulation between $\mA$ and $\mAp$. 
\item The greatest fuzzy simulation between $\mA$ and~$\mAp$ exists. 
\item The greatest fuzzy auto-simulation of $\mA$ is a fuzzy pre-order and its norm is equal to~1. 
\end{enumerate}
\end{theorem}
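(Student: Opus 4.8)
The plan is to isolate a few routine algebraic facts about composition and the fuzzy-subsethood degree $S$, and then dispatch the five items in order, each feeding the next.

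First I would record the auxiliary facts. Composition of fuzzy relations, and of a fuzzy set with a fuzzy relation, is monotone in each argument and distributes over arbitrary unions $\bigcup$; both follow from \eqref{fop: GDJSK 10} and \eqref{fop: GDJSK 155}. Also $(\varphi \circ \psi)^{-1} = \psi^{-1} \circ \varphi^{-1}$ (noted in Section~\ref{section: prel}) and $(\bigcup\Phi)^{-1} = \bigcup\{\varphi^{-1} \mid \varphi \in \Phi\}$ by unfolding definitions. For $S$ I would use four properties, each provable directly from Lemma~\ref{lemma: JHFJW}: (i) $S(f,g) = 1$ iff $f \leq g$, from \eqref{fop: GDJSK 30}; (ii) $g \leq g'$ implies $S(f,g) \leq S(f,g')$, from \eqref{fop: GDJSK 20}; (iii) transitivity $S(f,g) \fand S(g,h) \leq S(f,h)$, which reduces pointwise to $(a\fto b)\fand(b\fto c)\leq(a\fto c)$, a consequence of \eqref{fop: GDJSK 60} and \eqref{fop: GDJSK 00}; and (iv) right-compatibility $S(f,g) \leq S(f \circ \theta,\, g \circ \theta)$, obtained from $S(f,g)\fand(f\circ\theta)(z) = \bigvee_y\bigl(S(f,g)\fand f(y)\fand\theta(y,z)\bigr) \leq \bigvee_y\bigl(g(y)\fand\theta(y,z)\bigr) = (g\circ\theta)(z)$ via \eqref{fop: GDJSK 155}, \eqref{fop: GDJSK 60} and \eqref{fop: GDJSK 00}.

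Item~1 is then immediate: $\varphi \leq \psi$ gives $\varphi^{-1} \leq \psi^{-1}$, hence $\sigmaAp \circ \varphi^{-1} \leq \sigmaAp \circ \psi^{-1}$ by monotonicity, hence $\normS{\varphi}{\mA}{\mAp} \leq \normS{\psi}{\mA}{\mAp}$ by~(ii). For item~2 I would verify \eqref{eq: HFHAJ 2} and \eqref{eq: HFHAJ 3} for $\varphi \circ \psi$ by rewriting $(\varphi\circ\psi)^{-1} = \psi^{-1}\circ\varphi^{-1}$ and chaining the conditions for $\varphi$ and then for $\psi$ with monotonicity and associativity of $\circ$: $\psi^{-1}\circ\varphi^{-1}\circ\deltaA_s \leq \psi^{-1}\circ\deltaAp_s\circ\varphi^{-1} \leq \deltaAdp_s\circ\psi^{-1}\circ\varphi^{-1}$ and $\psi^{-1}\circ\varphi^{-1}\circ\tauA \leq \psi^{-1}\circ\tauAp \leq \tauAdp$. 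For the norm inequality \eqref{eq: DJHSD}, apply~(iv) with $\theta = \varphi^{-1}$ to get $\normS{\psi}{\mAp}{\mAdp} \leq S(\sigmaAp\circ\varphi^{-1},\, \sigmaAdp\circ\psi^{-1}\circ\varphi^{-1})$, then combine this with $\normS{\varphi}{\mA}{\mAp} = S(\sigma^\mA,\, \sigmaAp\circ\varphi^{-1})$ using monotonicity of $\fand$ \eqref{fop: GDJSK 10} and transitivity~(iii), noting $\sigmaAdp\circ(\varphi\circ\psi)^{-1} = \sigmaAdp\circ\psi^{-1}\circ\varphi^{-1}$.

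Item~3 follows by distributing $\circ$ over $\bigcup$: $(\bigcup\Phi)^{-1}\circ\deltaA_s = \bigcup_{\varphi\in\Phi}(\varphi^{-1}\circ\deltaA_s) \leq \bigcup_{\varphi\in\Phi}(\deltaAp_s\circ\varphi^{-1}) = \deltaAp_s\circ(\bigcup\Phi)^{-1}$, and likewise $(\bigcup\Phi)^{-1}\circ\tauA = \bigcup_{\varphi\in\Phi}(\varphi^{-1}\circ\tauA) \leq \tauAp$ since each term is $\leq \tauAp$ (the case $\Phi=\emptyset$ is covered by Remark~\ref{remark: HHDFS}). Item~4 is then immediate: take $\Phi$ to be the set of all fuzzy simulations between $\mA$ and $\mAp$, which is nonempty by Remark~\ref{remark: HHDFS}; by item~3, $\bigcup\Phi$ is a fuzzy simulation, and it dominates every member of $\Phi$, hence it is the greatest. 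Finally, for item~5 let $\psi$ be the greatest fuzzy auto-simulation of $\mA$ (it exists by item~4 with $\mAp = \mA$): it is reflexive since $id_A \leq \psi$ by Remark~\ref{remark: HHDFS}, and transitive since $\psi\circ\psi$ is again a fuzzy auto-simulation by item~2, forcing $\psi\circ\psi \leq \psi$; so $\psi$ is a fuzzy pre-order. Reflexivity gives $id_A \leq \psi^{-1}$, whence $\sigma^\mA = \sigma^\mA\circ id_A \leq \sigma^\mA\circ\psi^{-1}$ and therefore $\normS{\psi}{\mA}{\mA} = S(\sigma^\mA,\, \sigma^\mA\circ\psi^{-1}) = 1$ by~(i). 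The one step that requires genuine care is the norm inequality in item~2, where property~(iv) has to be arranged so that the composite $\theta = \varphi^{-1}$ lands on the correct side; everything else is bookkeeping with monotonicity, associativity, and the converse-of-composition identity.
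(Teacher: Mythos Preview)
Your proof is correct and follows the same overall skeleton as the paper's: items~1, 3, 4, 5 and the simulation conditions in item~2 are handled identically (monotonicity and distributivity of $\circ$, the converse-of-composition identity, and the facts about $id_A$ from Remark~\ref{remark: HHDFS}).

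The one place you diverge is the norm inequality~\eqref{eq: DJHSD}. The paper proves it by a direct pointwise calculation: it fixes $x\in A$, reduces via \eqref{fop: GDJSK 80a} and \eqref{fop: GDJSK 20} to showing $(\sigmaAp\circ\varphi^{-1})(x)\fand\normS{\psi}{\mAp}{\mAdp}\leq(\sigmaAdp\circ\psi^{-1}\circ\varphi^{-1})(x)$, then unfolds the supremum over $x'\in A'$ with \eqref{fop: GDJSK 155} and finishes with \eqref{fop: GDJSK 60}. You instead isolate two reusable properties of the subsethood degree $S$ --- transitivity $S(f,g)\fand S(g,h)\leq S(f,h)$ and right-compatibility $S(f,g)\leq S(f\circ\theta,g\circ\theta)$ --- and combine them in two lines. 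The underlying residuated-lattice facts are the same, but your packaging is more structural and the two lemmas about $S$ would be reusable elsewhere (e.g.\ in the bisimulation counterpart); the paper's argument is more self-contained and avoids stating auxiliary lemmas. Neither approach is materially shorter once the auxiliary facts are proved.
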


\begin{proof}
The first assertion follows from~\eqref{fop: GDJSK 10} and~\eqref{fop: GDJSK 20}.

Consider the second assertion. Let $\varphi$ be a fuzzy simulation between $\mA$ and $\mAp$ and $\psi$ be a fuzzy simulation between $\mAp$ and $\mAdp$. Thus, for every $s \in \Sigma$, 
\[
\begin{array}{rclcrcl}
\varphi^{-1} \circ \deltaA_s & \leq & \deltaAp_s \circ \varphi^{-1} & & 
\psi^{-1} \circ \deltaAp_s & \leq & \deltaAdp_s \circ \psi^{-1} \\
\varphi^{-1} \circ \tauA & \leq & \tauAp & &
\psi^{-1} \circ \tauAp & \leq & \tauAdp.
\end{array}
\]
Since $\circ$ is associative and $\fand$ is monotonic as shown by \eqref{fop: GDJSK 10}, for every $s \in \Sigma$, 
\begin{eqnarray*}
&	(\varphi \circ \psi)^{-1} \circ \deltaA_s 
= \psi^{-1} \circ \varphi^{-1} \circ \deltaA_s
\leq \psi^{-1} \circ \deltaAp_s \circ \varphi^{-1}\ 
\leq \deltaAdp_s \circ \psi^{-1} \circ \varphi^{-1} 
= \deltaAdp_s \circ (\varphi \circ \psi)^{-1}, \\
& (\varphi \circ \psi)^{-1} \circ \tauA  
= \psi^{-1} \circ \varphi^{-1} \circ \tauA 
\leq \psi^{-1} \circ \tauAp  
\leq \tauAdp.
\end{eqnarray*}
Therefore, $\varphi \circ \psi$ is a fuzzy simulation between $\mA$ and $\mAdp$. 
To prove~\eqref{eq: DJHSD}, by~\eqref{fop: GDJSK 10}, it suffices to show that, for every $x \in A$, 
\begin{eqnarray*}
& (\sigmaA(x) \fto (\sigmaAp \circ \varphi^{-1})(x)) \fand \normS{\psi}{\mAp}{\mAdp} \leq (\sigmaA(x) \fto (\sigmaAdp \circ \psi^{-1} \circ \varphi^{-1})(x)).
\end{eqnarray*}
By \eqref{fop: GDJSK 80a} and~\eqref{fop: GDJSK 20}, it suffices to show that, for every $x \in A$, 
\[ (\sigmaAp \circ \varphi^{-1})(x) \fand \normS{\psi}{\mAp}{\mAdp} \leq (\sigmaAdp \circ \psi^{-1} \circ \varphi^{-1})(x). \]
By \eqref{fop: GDJSK 155} and~\eqref{fop: GDJSK 10}, it suffices to show that, for every $x' \in A'$, 
\[ \sigmaAp(x') \fand \normS{\psi}{\mAp}{\mAdp} \leq (\sigmaAdp \circ \psi^{-1})(x'). \]
This holds due to the definition of $\normS{\psi}{\mAp}{\mAdp}$, \eqref{fop: GDJSK 10} and~\eqref{fop: GDJSK 60}.

Consider the third assertion. Let $\Phi$ be a set of fuzzy simulations between $\mA$ and $\mAp$. 
By using~\eqref{fop: GDJSK 155} and~\eqref{eq: HFHAJ 2}, we have that, for every $s \in \Sigma$, 
\begin{eqnarray*}
(\textstyle\bigcup\Phi)^{-1} \circ \deltaA_s 
& = & \textstyle\bigcup\{\varphi^{-1} \mid \varphi \in \Phi\} \circ \deltaA_s \\
& = & \textstyle\bigcup\{\varphi^{-1} \circ \deltaA_s \mid \varphi \in \Phi\} \\
& \leq & \textstyle\bigcup\{\deltaAp_s \circ \varphi^{-1} \mid \varphi \in \Phi\} \\
& = & \deltaAp_s \circ \textstyle\bigcup\{\varphi^{-1} \mid \varphi \in \Phi\} \\
& = & \deltaAp_s \circ (\textstyle\bigcup\Phi)^{-1}. 
\end{eqnarray*}
Similarly, by using~\eqref{fop: GDJSK 155} and~\eqref{eq: HFHAJ 3}, it can be shown that $(\bigcup\Phi)^{-1} \circ \tauA \leq \tauAp$. Therefore, $\bigcup\Phi$ is a fuzzy simulation between $\mA$ and $\mAp$. 

The fourth assertion follows from the third one. 
The fifth assertion follows from the first and second assertions and Remark~\ref{remark: HHDFS} (observe that $\normS{id_A}{\mA}{\mA} = 1$). 
\myend
\end{proof}

Apart from the first point of Remark~\ref{remark: HHDFS}, the following theorem also concerns the relationship between our notion of fuzzy simulation and the notion of (forward) simulation introduced by {\'C}iri{\'c} et al.\ in~\cite{CiricIDB12}. 

\begin{theorem}\label{theorem: JHFHA}
	Let $\mA$ and $\mAp$ be fuzzy automata.
	\begin{enumerate}
		\item A fuzzy simulation $\varphi$ between $\mA$ and $\mAp$ is a simulation between them iff $\nZs = 1$. 
		\item Let $\varphi$ be the greatest fuzzy simulation between $\mA$ and~$\mAp$. If $\nZs = 1$, then $\varphi$ is also the greatest simulation between $\mA$ and $\mAp$. Otherwise, there are no simulations between $\mA$ and $\mAp$. 
	\end{enumerate}
\end{theorem}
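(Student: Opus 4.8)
The plan is to reduce both assertions to the residuum characterization of the lattice order, \eqref{fop: GDJSK 30}, together with the elementary observation that an infimum $\bigwedge_{i \in I} a_i$ over $L$ equals $1$ exactly when $a_i = 1$ for every $i \in I$ (here each term $f(x) \fto g(x)$ is automatically $\leq 1$).

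For the first assertion I would simply unfold the definitions: a fuzzy simulation $\varphi$ between $\mA$ and $\mAp$ is a simulation iff it additionally satisfies \eqref{eq: HFHAJ 1}, that is, $\sigma^\mA \leq \sigmaAp \circ \varphi^{-1}$. By \eqref{fop: GDJSK 30} this holds iff $(\sigma^\mA(x) \fto (\sigmaAp \circ \varphi^{-1})(x)) = 1$ for all $x \in A$, which by the infimum observation is equivalent to $\bigwedge_{x \in A}(\sigma^\mA(x) \fto (\sigmaAp \circ \varphi^{-1})(x)) = 1$, i.e.\ to $S(\sigma^\mA, \sigmaAp \circ \varphi^{-1}) = \nZs = 1$. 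This is a short, direct computation.

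For the second assertion, let $\varphi$ be the greatest fuzzy simulation between $\mA$ and $\mAp$, which exists by Theorem~\ref{theorem: HFHSA}(4). If $\nZs = 1$, then by the first assertion $\varphi$ is itself a simulation; moreover, for any simulation $\psi$ between $\mA$ and $\mAp$, Remark~\ref{remark: HHDFS}(1) says $\psi$ is in particular a fuzzy simulation, so $\psi \leq \varphi$ by maximality, and hence $\varphi$ is the greatest simulation. If $\nZs \neq 1$, assume for contradiction that some simulation $\psi$ between $\mA$ and $\mAp$ exists. Then $\psi$ is a fuzzy simulation, so $\psi \leq \varphi$, and Theorem~\ref{theorem: HFHSA}(1) gives $\normS{\psi}{\mA}{\mAp} \leq \nZs$. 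But $\psi$ being a simulation forces $\normS{\psi}{\mA}{\mAp} = 1$ (by the first assertion, or Remark~\ref{remark: HHDFS}(1)), whence $\nZs = 1$, a contradiction; so no simulation exists.

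The only step needing a little care is the ``infimum equals $1$'' fact invoked in the first assertion; everything else follows immediately from results already established — existence and maximality of the greatest fuzzy simulation (Theorem~\ref{theorem: HFHSA}(3)--(4)), monotonicity of the norm (Theorem~\ref{theorem: HFHSA}(1)), and the inclusion of simulations into fuzzy simulations (Remark~\ref{remark: HHDFS}(1)). I do not anticipate a genuine obstacle.
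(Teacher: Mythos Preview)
Your proposal is correct and follows essentially the same route as the paper's proof: both reduce the first assertion to the equivalence $\nZs = 1 \iff \eqref{eq: HFHAJ 1}$ via \eqref{fop: GDJSK 30} and the ``infimum equals $1$'' observation, and both derive the second assertion from the first together with the fact that every simulation is a fuzzy simulation with norm~$1$. Your argument for the second part is spelled out in slightly more detail (explicitly invoking Theorem~\ref{theorem: HFHSA}(1) for the monotonicity of the norm), but the underlying idea is identical.
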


\begin{proof}
Consider the first assertion. We have $\nZs = 1$ iff, for every $w \in \Sigma^*$, 
\mbox{$(\sigma^\mA(w) \fto (\sigmaAp \circ \varphi^{-1})(w)) = 1$}, 
which is equivalent to 
\mbox{$\sigma^\mA(w) \leq (\sigmaAp \circ \varphi^{-1})(w)$} by~\eqref{fop: GDJSK 30}. 
That is, $\nZs = 1$ iff \eqref{eq: HFHAJ 1}	holds. The first assertion of the theorem follows from this and the fact that \eqref{eq: HFHAJ 2} and~\eqref{eq: HFHAJ 3} hold by assumption. 
The second assertion follows from the first one and the fact that, if $\psi$ is a simulation between $\mA$ and $\mAp$, then it is also a fuzzy simulation between $\mA$ and~$\mAp$ with $\normS{\psi}{\mA}{\mAp} = 1$. 
\myend
\end{proof}

By the above theorem and Examples~\ref{example: HGDSJ} and~\ref{example: HFUWH}, when $L = [0,1]$ and $\fand$ is the G\"odel or product t-norm, there are no simulations between the fuzzy automata $\mA$ and $\mAp$ specified in Example~\ref{example: HGDSJ} and illustrated in Fig.~\ref{fig: HFKWS}. 

\begin{corollary}\label{cor: HDKJW}
The greatest fuzzy auto-simulation of a fuzzy automaton $\mA$ is equal to the greatest auto-simulation of~$\mA$.
\end{corollary}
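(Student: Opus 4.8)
The plan is to derive this directly from Theorems~\ref{theorem: HFHSA} and~\ref{theorem: JHFHA} applied with $\mAp = \mA$. First, by the fourth and fifth assertions of Theorem~\ref{theorem: HFHSA}, the greatest fuzzy auto-simulation $\varphi$ of $\mA$ exists and satisfies $\normS{\varphi}{\mA}{\mA} = 1$. Then, instantiating the first assertion of Theorem~\ref{theorem: JHFHA} with $\mAp = \mA$, the equality $\normS{\varphi}{\mA}{\mA} = 1$ means precisely that $\varphi$ is not merely a fuzzy auto-simulation but an auto-simulation of $\mA$ in the sense of \cite{CiricIDB12} (equivalently, $\varphi$ additionally satisfies~\eqref{eq: HFHAJ 1} with $\mAp = \mA$).

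It then remains to see that this same $\varphi$ is the \emph{greatest} auto-simulation. Every auto-simulation $\psi$ of $\mA$ is in particular a fuzzy auto-simulation of $\mA$ by the first point of Remark~\ref{remark: HHDFS}, hence $\psi \leq \varphi$ by maximality of $\varphi$ among fuzzy auto-simulations. Thus $\varphi$ is an auto-simulation of $\mA$ that dominates every auto-simulation of $\mA$, i.e., it is the greatest auto-simulation of $\mA$; and by construction it is the greatest fuzzy auto-simulation of $\mA$, so the two coincide.

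There is essentially no obstacle: the corollary is an immediate repackaging of the second assertion of Theorem~\ref{theorem: JHFHA} in the case $\mAp = \mA$, combined with the fact (from Theorem~\ref{theorem: HFHSA}) that the greatest fuzzy auto-simulation has norm~$1$, which rules out the ``otherwise'' alternative of that assertion (consistently with $id_A$ always being an auto-simulation of~$\mA$).
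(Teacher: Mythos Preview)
Your proof is correct and follows essentially the same approach as the paper: establish that the greatest fuzzy auto-simulation has norm~$1$ and then apply Theorem~\ref{theorem: JHFHA} with $\mAp = \mA$. The only cosmetic difference is that you invoke the fifth assertion of Theorem~\ref{theorem: HFHSA} directly to get $\normS{\varphi}{\mA}{\mA} = 1$, whereas the paper re-derives this on the spot from $id_A \leq \varphi$ and the monotonicity of the norm; likewise, the paper quotes the second assertion of Theorem~\ref{theorem: JHFHA} directly rather than re-arguing maximality via Remark~\ref{remark: HHDFS}.
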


\begin{proof}
Let $\varphi$ be the greatest fuzzy auto-simulation of~$\mA$. By Remark~\ref{remark: HHDFS} and Theorem~\ref{theorem: HFHSA}, $id_A$ is a fuzzy auto-simulation of $\mA$ and $\normS{id_A}{\mA}{\mA} \leq \normS{\varphi}{\mA}{\mA}$. Observe that $\normS{id_A}{\mA}{\mA} = 1$. Hence, $\normS{\varphi}{\mA}{\mA} = 1$. By Theorem~\ref{theorem: JHFHA}, it follows that $\varphi$ is the greatest auto-simulation of~$\mA$. 
\myend
\end{proof}

The following theorem concerns the relationship between our notion of fuzzy simulation and the notion of approximate (forward) simulation introduced by Stanimirovi{\'c} et al.\ in~\cite{SMC.20} for the case where $\mL$ is a complete Heyting algebra. 

\begin{theorem}\label{theorem: GHFSA}
Suppose $\mL$ is a complete Heyting algebra. Let $\varphi$ be the greatest fuzzy simulation between fuzzy automata $\mA$ and $\mAp$. 
\begin{enumerate}
\item For every $\lambda \in L$, there exists a $\lambda$-approximate simulation between $\mA$ and $\mAp$ iff $\lambda \leq \nZs$. 
\item For $\lambda = \nZs$ and $\psi$ being the greatest $\lambda$-approximate simulation between $\mA$ and $\mAp$, we have $\varphi \leq \psi$. 
\end{enumerate}
\end{theorem}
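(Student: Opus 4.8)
The plan is to exploit two facts. First, any fuzzy simulation makes the inclusion-degrees in \eqref{eq: A-HFHAJ 2}--\eqref{eq: A-HFHAJ 3} equal to $1$: from $\varphi^{-1}\circ\deltaA_s \le \deltaAp_s\circ\varphi^{-1}$ and $\varphi^{-1}\circ\tauA\le\tauAp$ we get $S(\varphi^{-1}\circ\deltaA_s,\deltaAp_s\circ\varphi^{-1}) = S(\varphi^{-1}\circ\tauA,\tauAp) = 1$ by \eqref{fop: GDJSK 30}, so \eqref{eq: A-HFHAJ 2} and \eqref{eq: A-HFHAJ 3} hold automatically for $\varphi$, for every $\lambda$. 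Second --- the key device --- from any $\lambda$-approximate simulation $\psi$ one can manufacture an exact fuzzy simulation by scaling: set $\psi':=\lambda\land\psi$, so $(\psi')^{-1} = \lambda\land\psi^{-1}$.

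I would first establish this scaling claim. Since $\mL$ is a complete Heyting algebra, $\fand=\land$, hence $\land$ distributes over arbitrary suprema (this is \eqref{fop: GDJSK 155} with $\fand=\land$). Expanding compositions by their definition then gives $\deltaAp_s\circ(\lambda\land\psi^{-1}) = \lambda\land(\deltaAp_s\circ\psi^{-1})$ and $\sigmaAp\circ(\lambda\land\psi^{-1}) = \lambda\land(\sigmaAp\circ\psi^{-1})$, while $(\lambda\land\psi^{-1})\circ\deltaA_s \le \lambda\land(\psi^{-1}\circ\deltaA_s)$ and $(\lambda\land\psi^{-1})\circ\tauA \le \lambda\land(\psi^{-1}\circ\tauA)$ (every joint term in those two suprema is $\le\lambda$, and $\lambda\land\psi^{-1}\le\psi^{-1}$). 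Rewriting \eqref{eq: A-HFHAJ 2}--\eqref{eq: A-HFHAJ 3} via the adjunction \eqref{fop: GDJSK 00} as $\lambda\land(\psi^{-1}\circ\deltaA_s)\le\deltaAp_s\circ\psi^{-1}$ and $\lambda\land(\psi^{-1}\circ\tauA)\le\tauAp$, one then chains
\[ (\psi')^{-1}\circ\deltaA_s \ \le\ \lambda\land(\psi^{-1}\circ\deltaA_s)\ \le\ \lambda\land(\deltaAp_s\circ\psi^{-1})\ =\ \deltaAp_s\circ(\psi')^{-1} \]
(the middle step using $\lambda\land(\psi^{-1}\circ\deltaA_s)\le\lambda$), which is \eqref{eq: HFHAJ 2} for $\psi'$, and likewise $(\psi')^{-1}\circ\tauA \le \lambda\land(\psi^{-1}\circ\tauA)\le\tauAp$, which is \eqref{eq: HFHAJ 3}. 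Hence $\psi'$ is a fuzzy simulation, so $\psi'\le\varphi$ by maximality of $\varphi$ (which exists by Theorem~\ref{theorem: HFHSA}).

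For Part 1, the $(\Leftarrow)$ direction is then easy: if $\lambda\le\nZs$, then $\varphi$ itself is a $\lambda$-approximate simulation, since \eqref{eq: A-HFHAJ 2} and \eqref{eq: A-HFHAJ 3} hold for $\varphi$ by the first fact and \eqref{eq: A-HFHAJ 1} holds because $S(\sigma^\mA,\sigmaAp\circ\varphi^{-1}) = \nZs\ge\lambda$. For the $(\Rightarrow)$ direction, given a $\lambda$-approximate simulation $\psi$, I form $\psi' = \lambda\land\psi\le\varphi$ as above, rewrite \eqref{eq: A-HFHAJ 1} via \eqref{fop: GDJSK 00} as $\lambda\land\sigma^\mA(x)\le(\sigmaAp\circ\psi^{-1})(x)$ for every $x\in A$, and combine it with $\lambda\land\sigma^\mA(x)\le\lambda$, with $\sigmaAp\circ(\lambda\land\psi^{-1}) = \lambda\land(\sigmaAp\circ\psi^{-1})$, and with $\psi'\le\varphi$ to get $\lambda\land\sigma^\mA(x)\le(\sigmaAp\circ\varphi^{-1})(x)$; by \eqref{fop: GDJSK 00} this reads $\lambda\le(\sigma^\mA(x)\fto(\sigmaAp\circ\varphi^{-1})(x))$, and the infimum over $x\in A$ gives $\lambda\le\nZs$. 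Part 2 is immediate: for $\lambda=\nZs$ we have $\lambda\le\nZs$, so by $(\Leftarrow)$ the relation $\varphi$ is a $\lambda$-approximate simulation, hence $\varphi\le\psi$ since $\psi$ is the greatest one.

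The main obstacle I anticipate is the scaling claim --- concretely, bringing the two sides of \eqref{eq: HFHAJ 2} for $\psi'$ into comparable form. That is exactly where completeness of the Heyting algebra is used (infinite distributivity of $\land$ over $\bigvee$), which is why the statement is confined to Heyting algebras rather than arbitrary complete residuated lattices.
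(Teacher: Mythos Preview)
Your proof is correct. For Part~1 you follow essentially the same line as the paper: in the ``if'' direction you observe that $\varphi$ itself is a $\lambda$-approximate simulation, and in the ``only if'' direction you scale an approximate simulation $\psi$ down to $\psi'=\lambda\land\psi$ and verify that $\psi'$ is a genuine fuzzy simulation, hence $\psi'\le\varphi$; the paper does exactly this (writing $\xi$ for your $\psi'$) and then invokes the monotonicity of the norm from Theorem~\ref{theorem: HFHSA} where you unfold that monotonicity by hand.

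The one genuine difference is Part~2. The paper proves $\varphi\le\psi$ by building a new relation $\xi$ with $\xi(x,x')=1$ whenever $\varphi(x,x')\ge\lambda$ and $\xi(x,x')=\varphi(x,x')$ otherwise, checking that this $\xi$ is a $\lambda$-approximate simulation, and concluding $\varphi\le\xi\le\psi$. Your argument is shorter and cleaner: since $\lambda=\nZs$ gives $\lambda\le\nZs$, the ``if'' direction of Part~1 already shows that $\varphi$ is a $\lambda$-approximate simulation, so $\varphi\le\psi$ immediately by maximality of~$\psi$. Both are valid; the paper's explicit $\xi$ is not needed for the theorem as stated, though it does furnish the concrete witness used in Example~\ref{example: JHDLS} to show that the inequality $\varphi\le\psi$ can be strict.
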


\begin{proof}
	Consider the first assertion. For the ``if'' direction, assume that $\lambda \leq \nZs$, which means \eqref{eq: A-HFHAJ 1} holds. The conditions~\eqref{eq: A-HFHAJ 2} and~\eqref{eq: A-HFHAJ 3} follow from~\eqref{eq: HFHAJ 2} and~\eqref{eq: HFHAJ 3}, respectively. Therefore, $\varphi$ is a $\lambda$-approximate simulation between $\mA$ and $\mAp$. For the ``only if'' direction, assume that $\psi$ is a $\lambda$-approximate simulation between $\mA$ and $\mAp$. 
	Thus, for every $s \in \Sigma$, 
	\begin{eqnarray*}
		\lambda & \leq & S(\sigma^\mA, \sigmaAp \circ \psi^{-1}) \\
		\lambda & \leq & S(\psi^{-1} \circ \deltaA_s, \deltaAp_s \circ \psi^{-1}) \\
		\lambda & \leq & S(\psi^{-1} \circ \tauA, \tauAp).
	\end{eqnarray*}
	Let $\xi = \lambda \land \psi$. Since $\fand = \land$, from the above inequalities it follows that 
	\begin{eqnarray*}
		\lambda & \leq & S(\sigma^\mA, \sigmaAp \circ \xi^{-1}) \\
		\xi^{-1} \circ \deltaA_s & \leq & \deltaAp_s \circ \xi^{-1} \\
		\xi^{-1} \circ \tauA & \leq & \tauAp.
	\end{eqnarray*}
	That is, $\xi$ is a fuzzy simulation between $\mA$ and $\mAp$ with \mbox{$\lambda \leq \normS{\xi}{\mA}{\mAp}$}. Thus, $\xi \leq \varphi$ and, by Theorem~\ref{theorem: HFHSA}, it follows that $\lambda \leq \normS{\xi}{\mA}{\mAp} \leq \nZs$. 
	
	Consider the second assertion. Let $\xi: A \times A' \to L$ be the fuzzy relation defined as follows:  for $\tuple{x,x'} \in A \times A'$, $\xi(x,x') = 1$ if $\varphi(x,x') \geq \lambda$, and $\xi(x,x') = \varphi(x,x')$ otherwise. Thus, $\varphi \leq \xi$ and 
	\begin{eqnarray*}
	\lambda \land (\xi^{-1} \circ \deltaA_s) & = & \lambda \land (\varphi^{-1} \circ \deltaA_s) \\
	\lambda \land (\xi^{-1} \circ \tauA) & = & \lambda \land (\varphi^{-1} \circ \tauA).
	\end{eqnarray*}
	Since $\fand = \land$ and $\varphi$ satisfies~\eqref{eq: HFHAJ 2} and~\eqref{eq: HFHAJ 3}, it follows that 
	\begin{eqnarray*}
		\lambda & \leq & S(\xi^{-1} \circ \deltaA_s, \deltaAp_s \circ \xi^{-1}) \\
		\lambda & \leq & S(\xi^{-1} \circ \tauA, \tauAp).
	\end{eqnarray*}
	Since $S(\sigma^\mA, \sigmaAp \circ \varphi^{-1}) = \nZs = \lambda$, we also have 
	\begin{eqnarray*}
		\lambda & \leq & S(\sigma^\mA, \sigmaAp \circ \xi^{-1}). 
	\end{eqnarray*}
	Therefore, $\xi$ is a $\lambda$-approximate simulation between $\mA$ and $\mAp$. Hence, $\xi \leq \psi$, which implies $\varphi \leq \psi$ since $\varphi \leq \xi$. 
\myend
\end{proof}

\begin{example}\label{example: JHDLS}
Let $L = [0,1]$ and let $\fand$ be the G\"odel t-norm. Thus, $\mL$ is a complete Heyting algebra. Let $\mA$ and $\mAp$ be the fuzzy automata specified in Example~\ref{example: HGDSJ} and illustrated in Fig.~\ref{fig: HFKWS}. Recall that $\varphi = \{\tuple{u,u'}\!:\!0.7$, $\tuple{v,v'}\!:\!1$, $\tuple{v,w'}\!:\!1$, $\tuple{w,v'}\!:\!0.6$, $\tuple{w,w'}\!:\!1\}$ is the greatest fuzzy simulation between $\mA$ and $\mAp$ and $\nZs = 0.6$. By the proof of Theorem~\ref{theorem: GHFSA}, $\xi = \{\tuple{u,u'}\!:\!1$, $\tuple{v,v'}\!:\!1$, $\tuple{v,w'}\!:\!1$, $\tuple{w,v'}\!:\!1$, $\tuple{w,w'}\!:\!1\}$ is a $0.6$-approximate simulation between $\mA$ and $\mAp$. It can be checked that $\xi$ is the greatest $0.6$-approximate simulation between $\mA$ and $\mAp$. Note that $\varphi < \xi$.
\myend
\end{example}

\subsection{Preservation of the Recognized Language}

In this subsection, we present results stating that the fuzzy language recognized by a fuzzy automaton is fuzzily preserved by fuzzy simulations. 

Recall that, for $x \in A$, $\bL(\mA,x)$ denotes $\bL(\mA_x)$, where $\mA_x$ is the fuzzy automaton that differs from $\mA$ only in that $\sigma^{\mA_x} = \{x:1\}$. The following lemma states that, if $\tuple{x,x'} \in A \times A'$ and $\varphi$ is a fuzzy simulation between $\mA$ and $\mAp$, then the fuzzy degree in which the fuzzy language recognized by $\mA_x$ is a subset of the fuzzy language recognized by $\mAp_{x'}$ is greater than or equal to $\varphi(x,x')$.   

\begin{lemma}\label{lemma: FJHSS}
If $\varphi$ is a fuzzy simulation between fuzzy automata $\mA$ and $\mAp$, then for every $\tuple{x,x'} \in A \times A'$:
\[
	\varphi(x,x') \leq S(\bL(\mA,x), \bL(\mAp,x')).
\]
\end{lemma}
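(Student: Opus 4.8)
The plan is to reformulate the statement as a word-indexed inequality between fuzzy sets and then prove it by a straightforward induction on word length.

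For a word $w = s_1\ldots s_n \in \Sigma^*$ write $f^\mA_w$ for the fuzzy subset $\deltaA_{s_1}\circ\cdots\circ\deltaA_{s_n}\circ\tauA$ of $A$ (with $f^\mA_\epsilon = \tauA$), and analogously $f^\mAp_w$ for $\mAp$. Since $\sigma^{\mA_x} = \{x:1\}$ and $1$ is the unit of $\fand$, one has $\bL(\mA,x)(w) = f^\mA_w(x)$, and likewise $\bL(\mAp,x')(w) = f^\mAp_w(x')$. Hence, unfolding the definition of $S$ and using the adjunction \eqref{fop: GDJSK 00}, the inequality to be proved is equivalent to: for every $w\in\Sigma^*$, $\varphi(x,x')\fand f^\mA_w(x) \le f^\mAp_w(x')$. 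I would obtain this from the stronger claim
\[
\varphi^{-1}\circ f^\mA_w \;\le\; f^\mAp_w \qquad\mbox{for every } w\in\Sigma^*,
\]
an inequality between fuzzy subsets of $A'$, since $\varphi(x,x')\fand f^\mA_w(x) \le (\varphi^{-1}\circ f^\mA_w)(x')$ holds directly by the definition of composition.

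This claim is then proved by induction on $|w|$. For $w=\epsilon$ it reads $\varphi^{-1}\circ\tauA \le \tauAp$, which is exactly condition \eqref{eq: HFHAJ 3}. For the inductive step, let $w = s w'$ with $s\in\Sigma$ and $w'\in\Sigma^*$; using associativity of $\circ$, the monotonicity of $\circ$ in both arguments (a consequence of \eqref{fop: GDJSK 10}), condition \eqref{eq: HFHAJ 2}, and the induction hypothesis $\varphi^{-1}\circ f^\mA_{w'} \le f^\mAp_{w'}$, one computes
\[
\varphi^{-1}\circ f^\mA_w = \varphi^{-1}\circ\deltaA_s\circ f^\mA_{w'} \le \deltaAp_s\circ\varphi^{-1}\circ f^\mA_{w'} \le \deltaAp_s\circ f^\mAp_{w'} = f^\mAp_w .
\]
Finally, from $\varphi(x,x')\fand\bL(\mA,x)(w) \le \bL(\mAp,x')(w)$ for every $w$ I would conclude $\varphi(x,x') \le (\bL(\mA,x)(w)\fto\bL(\mAp,x')(w))$ by \eqref{fop: GDJSK 00}, and taking the infimum over all $w\in\Sigma^*$ (which exists since $\mL$ is complete) gives $\varphi(x,x') \le S(\bL(\mA,x),\bL(\mAp,x'))$.

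There is no serious obstacle: the only points requiring care are the choice of the inductive invariant --- the relational inequality $\varphi^{-1}\circ f^\mA_w \le f^\mAp_w$ indexed by words, rather than a pointwise statement that would not close under the step --- and keeping the orientations of $\varphi^{-1}$ and of the compositions consistent. It is also worth noting that condition \eqref{eq: HFHAJ 1} on the initial states is never used; this is precisely why the norm $\nZs$ does not enter this lemma but only later, when the recognized languages of the full automata (with their fuzzy sets of initial states) are compared.
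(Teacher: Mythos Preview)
Your proof is correct and follows essentially the same approach as the paper: both reduce the statement, via the adjunction~\eqref{fop: GDJSK 00}, to the relational inequality $\varphi^{-1}\circ\deltaA_{s_1}\circ\cdots\circ\deltaA_{s_n}\circ\tauA \le \deltaAp_{s_1}\circ\cdots\circ\deltaAp_{s_n}\circ\tauAp$ and establish it by pushing $\varphi^{-1}$ through the transition relations one at a time using~\eqref{eq: HFHAJ 2} and finishing with~\eqref{eq: HFHAJ 3}. The only cosmetic difference is that you present this as an explicit induction on $|w|$ with named intermediate fuzzy sets $f^\mA_w$, whereas the paper writes out the chain of inequalities directly.
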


\begin{proof}
Let $\varphi$ be a fuzzy simulation between $\mA$ and $\mAp$. 
It suffices to prove that, for every $w \in \Sigma^*$, 
\[ \varphi(x,x') \leq (\bL(\mA,x)(w) \fto \bL(\mAp,x')(w)), \]
or equivalently,
\[ \varphi(x,x') \fand \bL(\mA,x)(w) \leq \bL(\mAp,x')(w). \]
Let $w = s_1\ldots s_n$, with $n \geq 0$. 
The above inequality is equivalent to
\[ \varphi(x,x') \fand (\deltaA_{s_1}\circ\ldots\circ\deltaA_{s_n}\circ\tauA)(x) \leq (\deltaAp_{s_1}\circ\ldots\circ\deltaA_{s_n}\circ\tauAp)(x'). \]
It is sufficient to prove that 
\[ \varphi^{-1}\circ\deltaA_{s_1}\circ\ldots\circ\deltaA_{s_n}\circ\tauA \leq \deltaAp_{s_1}\circ\ldots\circ\deltaA_{s_n}\circ\tauAp. \]
Since $\circ$ is associative, by~\eqref{eq: HFHAJ 2}, \eqref{eq: HFHAJ 3} and~\eqref{fop: GDJSK 10}, we have 
\begin{eqnarray*}
& & \varphi^{-1}\circ\deltaA_{s_1}\circ\ldots\circ\deltaA_{s_n}\circ\tauA \\
& \leq & \deltaAp_{s_1}\circ\varphi^{-1}\circ\deltaA_{s_2}\circ\ldots\circ\deltaA_{s_n}\circ\tauA \\
& \leq & \deltaAp_{s_1}\circ\deltaAp_{s_2}\circ\varphi^{-1}\circ\deltaA_{s_3}\circ\ldots\circ\deltaA_{s_n}\circ\tauA \\
& \leq & \ldots \\
& \leq & \deltaAp_{s_1}\circ\deltaAp_{s_2}\circ\ldots\circ\deltaAp_{s_n}\circ\varphi^{-1}\circ\tauA \\
& \leq & \deltaAp_{s_1}\circ\ldots\circ\deltaAp_{s_n}\circ\tauAp.
\end{eqnarray*}
This completes the proof.
\myend
\end{proof}

The following theorem states that, if $\varphi$ is a fuzzy simulation between $\mA$ and $\mAp$, then the fuzzy degree in which the fuzzy language recognized by $\mA$ is a subset of the fuzzy language recognized by $\mAp$ is greater than or equal to the norm of~$\varphi$.   

\begin{theorem}\label{theorem: HDFUI}
If $\varphi$ is a fuzzy simulation between fuzzy automata $\mA$ and $\mAp$, then 
\[
	\nZs \leq S(\bL(\mA), \bL(\mAp)).
\]
\end{theorem}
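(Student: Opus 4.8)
The plan is to reduce the claim to a pointwise inequality over words and then re-use Lemma~\ref{lemma: FJHSS} (or the computation inside its proof). First, unfolding $S(\bL(\mA),\bL(\mAp)) = \bigwedge_{w\in\Sigma^*}(\bL(\mA)(w)\fto\bL(\mAp)(w))$ and applying the adjunction~\eqref{fop: GDJSK 00}, it suffices to prove, for every word $w = s_1\ldots s_n\in\Sigma^*$ (with $n\geq 0$), that $\nZs\fand\bL(\mA)(w)\leq\bL(\mAp)(w)$.

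Fix such a $w$ and set $\mu = \deltaA_{s_1}\circ\ldots\circ\deltaA_{s_n}\circ\tauA$ and $\mu' = \deltaAp_{s_1}\circ\ldots\circ\deltaAp_{s_n}\circ\tauAp$, regarded as fuzzy subsets of $A$ and $A'$ (for $n=0$ these are just $\tauA$ and $\tauAp$), so that $\bL(\mA)(w) = \bigvee_{x\in A}(\sigma^\mA(x)\fand\mu(x))$ and likewise $\bL(\mAp)(w) = \sigmaAp\circ\mu'$. The key steps are then: (i) push $\nZs\fand(-)$ inside the supremum using distributivity~\eqref{fop: GDJSK 155} and associativity of $\fand$; (ii) from the definition $\nZs = S(\sigma^\mA,\sigmaAp\circ\varphi^{-1})$ and~\eqref{fop: GDJSK 60} obtain $\nZs\fand\sigma^\mA(x)\leq(\sigmaAp\circ\varphi^{-1})(x)$ for each $x$, and then use~\eqref{fop: GDJSK 10} together with associativity of $\circ$ to conclude $\nZs\fand\bL(\mA)(w)\leq(\sigmaAp\circ\varphi^{-1})\circ\mu = \sigmaAp\circ(\varphi^{-1}\circ\mu)$; (iii) bound $\varphi^{-1}\circ\mu\leq\mu'$ by exactly the chain of inequalities already established in the proof of Lemma~\ref{lemma: FJHSS} (slide $\varphi^{-1}$ rightwards through $\deltaA_{s_1},\ldots,\deltaA_{s_n},\tauA$ via~\eqref{eq: HFHAJ 2}, \eqref{eq: HFHAJ 3} and~\eqref{fop: GDJSK 10}); (iv) combine to get $\nZs\fand\bL(\mA)(w)\leq\sigmaAp\circ\mu' = \bL(\mAp)(w)$.

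An essentially equivalent route avoids re-deriving step (iii): write $\bL(\mA)(w) = \bigvee_{x}\sigma^\mA(x)\fand\bL(\mA,x)(w)$ and $\bL(\mAp)(w)\geq\bigvee_{x'}\sigmaAp(x')\fand\bL(\mAp,x')(w)$, then feed in $\varphi(x,x')\fand\bL(\mA,x)(w)\leq\bL(\mAp,x')(w)$ (which is Lemma~\ref{lemma: FJHSS} combined with~\eqref{fop: GDJSK 60}) and $\nZs\fand\sigma^\mA(x)\leq\bigvee_{x'}\sigmaAp(x')\fand\varphi(x,x')$, and collapse the resulting double supremum. I do not expect a genuine obstacle here; the only delicate point is bookkeeping with the (possibly infinite) suprema over $A$ and $A'$, which is legitimate because $\mL$ is complete, so that~\eqref{fop: GDJSK 155} applies to arbitrary families.
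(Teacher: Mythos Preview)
Your proposal is correct and follows essentially the same approach as the paper: reduce to a per-word inequality, use the definition of the norm together with~\eqref{fop: GDJSK 60}, and invoke (the computation inside) Lemma~\ref{lemma: FJHSS}. The only difference is cosmetic: you apply the adjunction~\eqref{fop: GDJSK 00} at the outset and work on the $\fand$ side via~\eqref{fop: GDJSK 155}, whereas the paper stays on the $\fto$ side throughout (using~\eqref{fop: GDJSK 210}, \eqref{fop: GDJSK 100}, \eqref{fop: GDJSK 20}); your ``essentially equivalent route'' is in fact precisely the paper's argument.
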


\begin{proof}
We need to prove that 
\[
	S(\sigma^\mA, \sigmaAp \circ \varphi^{-1}) \leq S(\bL(\mA), \bL(\mAp)).
\]
That is, we need to prove that, for every $w \in \Sigma^*$, 
\[
	S(\sigma^\mA, \sigmaAp \circ \varphi^{-1}) \leq (\bL(\mA)(w) \fto \bL(\mAp)(w)),
\]
which is equivalent to 
\[
	\bigwedge_{x \in A}\!\Phi(x) \leq \Big(\!\bigvee_{x \in A}\! (\sigma^\mA(x) \fand \bL(\mA,x)(w)) \fto \bL(\mAp)(w)\Big).
\]
where $\Phi(x)$ denotes $\sigma^\mA(x) \fto (\sigmaAp \circ \varphi^{-1})(x)$. 
By~\eqref{fop: GDJSK 210}, it suffices to show that, for every $w \in \Sigma^*$, 
\[
	\bigwedge_{x \in A}\!\Phi(x) \leq \bigwedge_{x \in A} (\sigma^\mA(x) \fand \bL(\mA,x)(w) \fto \bL(\mAp)(w)).
\]
Thus, it suffices to show that, for every $w \in \Sigma^*$ and $x \in A$, 
\[
	\Phi(x) \leq (\sigma^\mA(x) \fand \bL(\mA,x)(w) \fto \bL(\mAp)(w)).
\]
By~\eqref{fop: GDJSK 100}, it suffices to show that, for every $w \in \Sigma^*$ and $x \in A$, 
\[
	\Phi(x) \leq (\sigma^\mA(x) \fto (\bL(\mA,x)(w) \fto \bL(\mAp)(w))).
\]
By~\eqref{fop: GDJSK 20}, it suffices to show that, for every $w \in \Sigma^*$ and $x \in A$, 
\[
	(\sigmaAp \circ \varphi^{-1})(x) \leq (\bL(\mA,x)(w) \fto \bL(\mAp)(w)), 
\]
which is equivalent to 
\[ \bigvee_{x' \in A'}\!(\sigmaAp(x') \fand \varphi(x,x'))\ \leq\ (\bL(\mA,x)(w) \fto \bigvee_{x' \in A'}\!(\sigmaAp(x') \fand \bL(\mAp,x')(w))). \]
By~\eqref{fop: GDJSK 20}, it suffices to show that, for every $w \in \Sigma^*$, $x \in A$ and $x' \in A'$, 
\[
	\sigmaAp(x') \fand \varphi(x,x') \leq (\bL(\mA,x)(w) \fto \sigmaAp(x') \fand \bL(\mAp,x')(w)), 
\]
which is equivalent to 
\[
	\sigmaAp(x') \fand \varphi(x,x') \fand \bL(\mA,x)(w) \leq \sigmaAp(x') \fand \bL(\mAp,x')(w).
\]
By~\eqref{fop: GDJSK 10}, it suffices to show that, for every $w \in \Sigma^*$, $x \in A$ and $x' \in A'$, 
\[
	\varphi(x,x') \fand \bL(\mA,x)(w) \leq \bL(\mAp,x')(w), 
\]
which is equivalent to 
\[
	\varphi(x,x') \leq (\bL(\mA,x)(w) \fto \bL(\mAp,x')(w)).
\]
This inequality follows from Lemma~\ref{lemma: FJHSS}. 
\myend
\end{proof}

\subsection{The Hennessy-Milner Property}

In this subsection, we present the Hennessy-Milner property of fuzzy simulations between fuzzy automata. It is a logical characterization of the greatest fuzzy simulation between two fuzzy automata under some assumptions.  

Let $\mFs$ be the smallest set of formulas over~$\Sigma$ and~$\mL$ such that:
\begin{itemize}
\item $\tau \in \mFs$; 
\item if $s \in \Sigma$ and $w \in \mFs$, then $(s \circ w) \in \mFs$;
\item if $a \in L$ and $w \in \mFs$, then $(a \to w) \in \mFs$;
\item if $w_1, w_2 \in \mFs$, then $(w_1 \land w_2) \in \mFs$.
\end{itemize}

The meaning of such formulas is as follows. 
Given a fuzzy automaton $\mA$ (over $\Sigma$ and $\mL$), a state $x \in A$ and a formula $w \in \mFs$, the fuzzy degree in which $x$ has the property $w$ is denoted by $w^\mA(x)$. It is a value from $L$ with the following intuition:
\begin{itemize}
\item $\tau^\mA(x)$ is the degree in which $x$ is a terminal state;
\item $(s \circ w)^\mA(x)$ is the degree in which executing the action $s$ at the state $x$ may lead to a state with the property $w$;  
\item $(a \to w)^\mA(x)$ is the degree in which $w^\mA(x) \geq a$;
\item $(w_1 \land w_2)^\mA(x)$ is the degree in which $x$ has both the properties $w_1$ and $w_2$. 
\end{itemize} 
Formally, the value $w^\mA(x)$ for $w \in \mFs \setminus \{\tau\}$ and $x \in A$ is defined inductively as follows: 
\begin{eqnarray*}
(s \circ w)^\mA(x) & = & (\deltaA_s \circ w^\mA)(x) \\
(a \to w)^\mA(x) & = & a \fto w^\mA(x) \\
(w_1 \land w_2)^\mA(x) & = & w_1^\mA(x) \land w_2^\mA(x).
\end{eqnarray*}
Thus, for $w \in \mFs$, $w^\mA$ is a fuzzy subset of~$A$.

\begin{example}\label{example: HFBBA}
Let $L = [0,1]$ and let $\fand$ be the G\"odel t-norm. Let $\mA$ and $\mAp$ be the fuzzy automata specified in Example~\ref{example: HGDSJ} and illustrated in Fig.~\ref{fig: HFKWS}. We have, for example, 
\begin{eqnarray*}
(s \circ (0.7 \to \tau))^\mA(u) & = & 0.8 \\
(s \circ (0.7 \to \tau))^\mAp(u') & = & 0.7 \\
(0.7 \to (s \circ \tau))^\mA(u) & = & 1 \\
(0.7 \to (s \circ \tau))^\mAp(u') & = & 1.
\end{eqnarray*}

\vspace{-3ex}

\myend
\end{example}

The following lemma is a generalization of Lemma~\ref{lemma: FJHSS}, as it implies that, if $\varphi$ is a fuzzy simulation between fuzzy automata $\mA$ and $\mAp$, then for every $\tuple{x,x'} \in A \times A'$:
\begin{equation}\label{eq: HDJHS}
\varphi(x,x') \leq \bigwedge_{w \in \mFs}\! (w^\mA(x) \fto w^\mAp(x')).
\end{equation}
This inequality states that the formulas of $\mFs$ are fuzzily preserved by fuzzy simulations. 

\begin{lemma}\label{lemma: FJHSSx}
If $\varphi$ is a fuzzy simulation between fuzzy automata $\mA$ and $\mAp$, then for every $w \in \mFs$: 
\[
	\varphi^{-1} \circ w^\mA \leq w^\mAp.
\]
\end{lemma}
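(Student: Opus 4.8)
The plan is to prove the inequality by structural induction on the formula $w \in \mFs$, establishing $\varphi^{-1} \circ w^\mA \leq w^\mAp$ at every stage. For the base case $w = \tau$ we have $w^\mA = \tauA$ and $w^\mAp = \tauAp$, so the claim is exactly condition \eqref{eq: HFHAJ 3} in the definition of a fuzzy simulation. The inductive step splits into the three formula-forming cases.

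For $w = (s \circ w')$ with $s \in \Sigma$ and $w' \in \mFs$, we have $w^\mA = \deltaA_s \circ w'^\mA$ and $w^\mAp = \deltaAp_s \circ w'^\mAp$. Using associativity of $\circ$, condition \eqref{eq: HFHAJ 2}, monotonicity of $\circ$ (a consequence of \eqref{fop: GDJSK 10}), and the induction hypothesis applied to $w'$, we obtain $\varphi^{-1} \circ \deltaA_s \circ w'^\mA \leq \deltaAp_s \circ \varphi^{-1} \circ w'^\mA \leq \deltaAp_s \circ w'^\mAp = w^\mAp$; this mirrors the computation already carried out in the proof of Lemma~\ref{lemma: FJHSS}. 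For $w = (w_1 \land w_2)$ we have $w^\mA = w_1^\mA \land w_2^\mA$, and since composition on the right with a fuzzy set is monotone (again by \eqref{fop: GDJSK 10}), $\varphi^{-1} \circ (w_1^\mA \land w_2^\mA) \leq (\varphi^{-1} \circ w_1^\mA) \land (\varphi^{-1} \circ w_2^\mA)$; applying the induction hypotheses to $w_1$ and $w_2$ gives the required bound $w_1^\mAp \land w_2^\mAp$.

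The one case that needs real work is $w = (a \to w')$ with $a \in L$, where $w^\mA(x) = a \fto w'^\mA(x)$. Unfolding the composition, it suffices to show, for every $x \in A$ and $x' \in A'$, that $\varphi(x,x') \fand (a \fto w'^\mA(x)) \leq a \fto w'^\mAp(x')$. By \eqref{fop: GDJSK 80a} the left-hand side is bounded by $a \fto (\varphi(x,x') \fand w'^\mA(x))$, and since $\varphi(x,x') \fand w'^\mA(x) \leq (\varphi^{-1} \circ w'^\mA)(x') \leq w'^\mAp(x')$ by the induction hypothesis, \eqref{fop: GDJSK 20} yields $a \fto (\varphi(x,x') \fand w'^\mA(x)) \leq a \fto w'^\mAp(x')$, completing this case.

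I expect the $(a \to w')$ case to be the main obstacle, since it is the only place where the residuum interacts non-trivially with composition; everything there hinges on the inequality $a \fand (b \fto c) \leq b \fto (a \fand c)$ provided by \eqref{fop: GDJSK 80a}, after which the argument is routine. The conjunction case also deserves a small remark: it relies on $\varphi^{-1} \circ (f \land g) \leq (\varphi^{-1} \circ f) \land (\varphi^{-1} \circ g)$, where only this inequality (not the reverse) holds in general, but that is precisely the direction needed here. The stated consequence \eqref{eq: HDJHS} then follows by rewriting $(\varphi^{-1} \circ w^\mA)(x') = \bigvee_{x \in A}(\varphi(x,x') \fand w^\mA(x)) \leq w^\mAp(x')$ and using \eqref{fop: GDJSK 00}.
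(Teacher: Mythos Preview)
Your proof is correct and follows essentially the same route as the paper: structural induction on $w$, with the four cases handled via \eqref{eq: HFHAJ 3}, \eqref{eq: HFHAJ 2}, \eqref{fop: GDJSK 80a} together with \eqref{fop: GDJSK 20}, and monotonicity of $\circ$, respectively. The only cosmetic difference is that the paper writes the $(a \to w')$ case at the level of fuzzy sets (as $\varphi^{-1} \circ (a \fto u^\mA) \leq (a \fto \varphi^{-1} \circ u^\mA) \leq (a \fto u^\mAp)$) rather than pointwise, but the underlying argument is identical.
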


\begin{proof}
Let $\varphi$ be a fuzzy simulation between $\mA$ and $\mAp$. 
We prove the lemma by induction on the structure of~$w$. 
\begin{itemize}
\item Case $w = \tau$: The assertion follows from~\eqref{eq: HFHAJ 3}. 

\item Case $w = (s \circ u)$: By~\eqref{eq: HFHAJ 2}, \eqref{fop: GDJSK 10} and the induction assumption, we have 
\[ \varphi^{-1} \circ w^\mA = \varphi^{-1} \circ \deltaA_s \circ u^\mA \leq \deltaAp_s \circ \varphi^{-1} \circ u^\mA \leq \]
\[ \leq \deltaAp_s \circ u^\mAp = w^\mAp. \]

\item Case $w = (a \to u)$: By~\eqref{fop: GDJSK 80a}, \eqref{fop: GDJSK 20} and the induction assumption, we have  
\[ \varphi^{-1} \circ w^\mA = \varphi^{-1} \circ (a \fto u^\mA)  \leq (a \fto \varphi^{-1} \circ u^\mA) \leq (a \fto u^\mAp) = w^\mAp. \]

\item Case $w = (w_1 \land w_2)$: By~\eqref{fop: GDJSK 10} and the induction assumption, we have 
\[ \varphi^{-1} \circ w^\mA = \varphi^{-1} \circ (w_1^\mA \land w_2^\mA) \leq (\varphi^{-1} \circ w_1^\mA) \land (\varphi^{-1} \circ w_2^\mA) \leq w_1^\mAp \land w_2^\mAp = w^\mAp. \]
\end{itemize}

\vspace{-2ex}

\end{proof}

The following theorem is about the Hennessy-Milner property of fuzzy simulations between fuzzy automata. 

\begin{theorem}\label{theorem: HGDJA}
Suppose that $\mL$ is linear and $\fand$ is continuous. 
Let $\mA$ and $\mAp$ be fuzzy automata, where $\mAp$ is image-finite. 
Let $\varphi: A \times A' \to L$ be the fuzzy relation defined as follows:
\[
	\varphi(x,x') = \bigwedge_{w \in \mFs}\! (w^\mA(x) \fto w^\mAp(x')).
\]
Then, $\varphi$ is the greatest fuzzy simulation between $\mA$ and $\mAp$. 
\end{theorem}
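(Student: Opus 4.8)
The plan is to prove the two inclusions separately: that $\varphi$ is a fuzzy simulation between $\mA$ and $\mAp$, and that it dominates every fuzzy simulation between them. The second direction is immediate from Lemma~\ref{lemma: FJHSSx}: if $\psi$ is any fuzzy simulation between $\mA$ and $\mAp$, then $\psi^{-1} \circ w^\mA \leq w^\mAp$ for every $w \in \mFs$, hence $\psi(x,x') \fand w^\mA(x) \leq w^\mAp(x')$, i.e.\ $\psi(x,x') \leq (w^\mA(x) \fto w^\mAp(x'))$ by~\eqref{fop: GDJSK 00}; taking the infimum over $w \in \mFs$ yields $\psi \leq \varphi$. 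So the real content is verifying that $\varphi$ satisfies \eqref{eq: HFHAJ 2} and \eqref{eq: HFHAJ 3}.

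Condition~\eqref{eq: HFHAJ 3}, i.e.\ $\varphi^{-1} \circ \tauA \leq \tauAp$, is routine: for $x' \in A'$ we have $(\varphi^{-1}\circ\tauA)(x') = \bigvee_{x\in A}(\varphi(x,x')\fand\tauA(x))$, and since $\tau\in\mFs$ with $\tau^\mA = \tauA$ we get $\varphi(x,x') \leq (\tauA(x)\fto\tauAp(x'))$, so each summand is at most $(\tauA(x)\fto\tauAp(x'))\fand\tauA(x) \leq \tauAp(x')$ by~\eqref{fop: GDJSK 60}; taking the supremum over $x$ gives the claim.

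The crux is condition~\eqref{eq: HFHAJ 2}, i.e.\ $\varphi^{-1}\circ\deltaA_s \leq \deltaAp_s\circ\varphi^{-1}$ for each $s \in \Sigma$. Unfolding the compositions, this amounts to showing, for all $x\in A$, $x'\in A'$ and $y\in A$, that
\[ \varphi(x,x') \fand \deltaA_s(x,y) \leq \textstyle\bigvee_{y'\in A'}(\deltaAp_s(x',y')\fand\varphi(y,y')). \]
I would argue by contradiction. Put $\lambda = \varphi(x,x')\fand\deltaA_s(x,y)$ and suppose the right-hand side is strictly below $\lambda$. By image-finiteness of $\mAp$, the set $Y' = \{y'\in A' \mid \deltaAp_s(x',y')>0\}$ is finite, say $Y' = \{y'_1,\dots,y'_m\}$, and only its members contribute to the supremum; write $d_i = \deltaAp_s(x',y'_i)$, so $d_i\fand\varphi(y,y'_i) < \lambda$ for each $i$. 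Since $\varphi(y,y'_i) = \bigwedge_{w\in\mFs}(w^\mA(y)\fto w^\mAp(y'_i))$ and $\fand$ is continuous, $\bigwedge_{w\in\mFs}\big(d_i\fand(w^\mA(y)\fto w^\mAp(y'_i))\big) < \lambda$, and by linearity of $\mL$ there is a single $w_i\in\mFs$ with $d_i\fand(w_i^\mA(y)\fto w_i^\mAp(y'_i)) < \lambda$. Now set $u = \bigwedge_{i=1}^m (w_i^\mA(y)\to w_i) \in \mFs$ (taking $u=(0\to\tau)$ if $Y'=\emptyset$). Then $u^\mA(y) = \bigwedge_i (w_i^\mA(y)\fto w_i^\mA(y)) = 1$, while $u^\mAp(y'_i) \leq w_i^\mA(y)\fto w_i^\mAp(y'_i)$, so $d_i\fand u^\mAp(y'_i) < \lambda$ for all $i$. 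Applying the definition of $\varphi$ to the formula $(s\circ u)\in\mFs$ gives $\varphi(x,x')\fand (s\circ u)^\mA(x) \leq (s\circ u)^\mAp(x')$; and since $(s\circ u)^\mA(x) = (\deltaA_s\circ u^\mA)(x) \geq \deltaA_s(x,y)\fand u^\mA(y) = \deltaA_s(x,y)$, we obtain $\lambda \leq (s\circ u)^\mAp(x') = \bigvee_{i=1}^m (d_i\fand u^\mAp(y'_i))$, a finite supremum equal (by linearity) to the maximum of values all strictly below $\lambda$ — a contradiction.

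The main obstacle is exactly this last step: turning the infimum defining $\varphi(y,y'_i)$ into a single witnessing formula $w_i$, and combining the finitely many witnesses into one formula $u$ that is forced to value $1$ at $y$ (via the truncation $w^\mA(y)\to w$) yet remains small at each $y'_i$. This is where all three hypotheses enter — image-finiteness keeps the conjunction finite, continuity of $\fand$ pushes $d_i\fand(-)$ through the infimum over $\mFs$, and linearity is used both to extract a witness from ``$\bigwedge < \lambda$'' and to read the final finite supremum as a maximum. The remaining parts (the ``greatest'' direction, condition~\eqref{eq: HFHAJ 3}, and the composition bookkeeping) are routine and can be dispatched as above.
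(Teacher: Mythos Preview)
Your proposal is correct and follows essentially the same route as the paper's proof: use Lemma~\ref{lemma: FJHSSx} for the ``greatest'' direction, verify~\eqref{eq: HFHAJ 3} directly from $\tau\in\mFs$, and establish~\eqref{eq: HFHAJ 2} by contradiction via the witness formula $s\circ\big(\bigwedge_i(w_i^\mA(y)\to w_i)\big)$, invoking continuity, linearity, and image-finiteness exactly where you indicate. Your treatment is in fact slightly more careful than the paper's in two places: you explicitly dispose of the case $Y'=\emptyset$, and you flag that linearity (not just continuity) is needed to extract a single witness $w_i$ from $\bigwedge_w(\cdots)<\lambda$.
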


\begin{proof}
By the consequence~\eqref{eq: HDJHS} of Lemma~\ref{lemma: FJHSSx}, it suffices to prove that $\varphi$ is a fuzzy simulation between $\mA$ and $\mAp$. 
By definition, for every $\tuple{x,x'} \in A \times A'$, 
\[
	\varphi(x,x') \leq (\tau^\mA(x) \fto \tau^\mAp(x')), 
\]
which implies 
\(
	\varphi(x,x') \fand \tauA(x) \leq \tauAp(x'). 
\)
Therefore, \eqref{eq: HFHAJ 3} holds. To prove~\eqref{eq: HFHAJ 2}, it suffices to show that, for every $\tuple{x',y} \in A' \times A$ and $x \in A$, there exists $y' \in A'$ such that  
\[ \varphi(x,x') \fand \deltaA_s(x,y) \leq \deltaAp_s(x',y') \fand \varphi(y,y'). \]
For a contradiction, suppose that there exist $\tuple{x',y} \in A' \times A$ and $x \in A$ such that, for every $y' \in A'$,   
\[ \varphi(x,x') \fand \deltaA_s(x,y) > \deltaAp_s(x',y') \fand \varphi(y,y'). \] 
Since $\fand$ is continuous, it follows that, for every $y' \in A'$, there exists $w_{y'} \in \mFs$ such that  
\[ \varphi(x,x') \fand \deltaA_s(x,y) > \deltaAp_s(x',y') \fand (w_{y'}^\mA(y) \fto w_{y'}^\mAp(y')). \] 
Let $y'_1,\ldots,y'_n$ be all elements of $A'$ such that $\deltaAp_s(x',y') > 0$ (we use here the assumption that $\mAp$ is image-finite). For $1 \leq i \leq n$, let $u_{y'_i} = (w_{y'_i}^\mA(y) \to w_{y'_i})$. We have that, for every $1 \leq i \leq n$, $u_{y'_i}^\mA(y) = 1$ (by~\eqref{fop: GDJSK 30}) and 
\[ \varphi(x,x') \fand \deltaA_s(x,y) > \deltaAp_s(x',y'_i) \fand u_{y'_i}^\mAp(y'_i). \] 
Since $\mL$ is linear, it follows that 
\begin{equation}\label{eq: HDJAA}
\varphi(x,x') \fand \deltaA_s(x,y) > \bigvee_{1 \leq i \leq n}\!(\deltaAp_s(x',y'_i) \fand u_{y'_i}^\mAp(y'_i)).
\end{equation}
Let $w = s \circ (u_{y'_1} \land \ldots \land u_{y'_n})$. Thus, by~\eqref{fop: GDJSK 10} and~\eqref{fop: GDJSK 40}, 
\begin{eqnarray*}
w^\mA(x) & \geq & \deltaA_s(x,y) \\
w^\mAp(x') & \leq & \bigvee_{1 \leq i \leq n}\!(\deltaAp_s(x',y'_i) \fand u_{y'_i}^\mAp(y'_i)).
\end{eqnarray*}
By~\eqref{eq: HDJAA} and~\eqref{fop: GDJSK 10}, it follows that 
\( \varphi(x,x') \fand w^\mA(x) > w^\mAp(x'), \)
which is equivalent to 
\( \varphi(x,x') > (w^\mA(x) \fto w^\mAp(x')). \)
This contradicts the definition of~$\varphi$. 
\myend
\end{proof}


\section{Fuzzy Bisimulations between Fuzzy Automata}
\label{section: f-s-bs}

In this section, we define and study fuzzy bisimulations between fuzzy automata. Apart from basic properties, we also present invariance results and the Hennessy-Milner property of such bisimulations. 

\begin{definition}
A~{\em fuzzy bisimulation} between fuzzy automata $\mA$ and $\mAp$ is a fuzzy relation \mbox{$\varphi: A \times A' \to L$} such that $\varphi$ is a fuzzy simulation between $\mA$ and $\mAp$ and $\varphi^{-1}$ is a fuzzy simulation between $\mAp$ and $\mA$. 
A~{\em fuzzy auto-bisimulation} of a fuzzy automaton $\mA$ is a fuzzy bisimulation between $\mA$ and itself. 
The {\em norm} of a fuzzy bisimulation $\varphi$ between fuzzy automata $\mA$ and $\mAp$, denoted by $\nZbs$, is defined as follows
\begin{eqnarray*}
	\nZbs & = & \nZs \land \normS{\varphi^{-1}}{\mAp}{\mA} \\
	& = & S(\sigma^\mA, \sigmaAp \circ \varphi^{-1}) \land S(\sigmaAp, \sigmaA \circ \varphi).
\end{eqnarray*}


\end{definition}

Note that a fuzzy relation \mbox{$\varphi: A \times A' \to L$} is a fuzzy bisimulation between $\mA$ and $\mAp$ iff it satisfies the conditions \eqref{eq: HFHAJ 2}, \eqref{eq: HFHAJ 3}, \eqref{eq: HFHAJ 5} and \eqref{eq: HFHAJ 6} for all $s \in \Sigma$. 

\begin{remark}\label{remark: HJFSA}
The following properties are consequences of the definition. 
\begin{enumerate}
\item Every bisimulation between $\mA$ and $\mAp$ is also a fuzzy bisimulation between $\mA$ and $\mAp$, but not vice versa. 
If $\varphi$ is a bisimulation between $\mA$ and $\mAp$, then $\nZbs = 1$. 
\item The empty fuzzy relation between $A$ and $A'$ is a fuzzy bisimulation between $\mA$ and $\mAp$. 
\item The identity relation $id_A$ is a fuzzy auto-bisimulation of~$\mA$. 
\myend
\end{enumerate}
\end{remark}

\begin{example}\label{example: KSNAO}
Let $L = [0,1]$ and let $\mA$ and $\mAp$ be the fuzzy automata specified in Example~\ref{example: HGDSJ} and illustrated in Fig.~\ref{fig: HFKWS}. In a similar way as for Example~\ref{example: HGDSJ}, it can be checked that:
\begin{itemize}
\item if $\fand$ is the G\"odel t-norm, then the fuzzy relation $\varphi = \{\tuple{u,u'}\!:\!0.6$, $\tuple{v,v'}\!:\!1$, $\tuple{v,w'}\!:\!0.6$, $\tuple{w,v'}\!:\!0.6$, $\tuple{w,w'}\!:\!1\}$ is the greatest fuzzy bisimulation between $\mA$ and $\mAp$ and $\nZbs = 0.6$;
\item if $\fand$ is the product t-norm, then the fuzzy relation $\varphi = \{\tuple{u,u'}\!:\!6/7$, $\tuple{v,v'}\!:\!1$, $\tuple{v,w'}\!:\!6/7$, $\tuple{w,v'}\!:\!6/7$, $\tuple{w,w'}\!:\!1\}$ is the greatest fuzzy bisimulation between $\mA$ and $\mAp$ and $\nZbs = 36/49$. 
\myend
\end{itemize}
\end{example}

\subsection{Basic Properties}

As basic properties of fuzzy bisimulations between fuzzy automata, we have the following theorem as well as results about the relationship between the notion of fuzzy bisimulation and the notions of bisimulation~\cite{CiricIDB12} and approximate bisimulation~\cite{SMC.20}.

\begin{theorem}\label{theorem: HFLKA}
Let $\mA$, $\mAp$ and $\mAdp$ be fuzzy automata. 
\begin{enumerate}
\item If $\varphi$ and $\psi$ are fuzzy bisimulations between $\mA$ and $\mAp$ and $\varphi \leq \psi$, then $\normBS{\varphi}{\mA}{\mAp} \leq \normBS{\psi}{\mA}{\mAp}$.
\item If $\varphi$ is a fuzzy bisimulation between $\mA$ and $\mAp$, then $\varphi^{-1}$ is a fuzzy bisimulation between $\mAp$ and $\mA$ and 
\[
	\normBS{\varphi^{-1}}{\mAp}{\mA} = \normBS{\varphi}{\mA}{\mAp}.
\] 
\item If $\varphi$ is a fuzzy bisimulation between $\mA$ and $\mAp$ and $\psi$ is a fuzzy bisimulation between $\mAp$ and $\mAdp$, then $\varphi \circ \psi$ is a fuzzy bisimulation between $\mA$ and $\mAdp$ and 
\begin{equation}\label{eq: DJHSD2}
\normBS{\varphi}{\mA}{\mAp} \fand \normBS{\psi}{\mAp}{\mAdp} \leq \normBS{\varphi \circ \psi}{\mA}{\mAdp}.
\end{equation} 
\item If $\Phi$ is a set of fuzzy bisimulations between $\mA$ and $\mAp$, then $\bigcup\Phi$ is also a fuzzy bisimulation between $\mA$ and~$\mAp$. 
\item The greatest fuzzy bisimulation between $\mA$ and $\mAp$ exists. 
\item The greatest fuzzy auto-bisimulation of $\mA$ is a fuzzy equivalence relation  and its norm is equal to~1. 
\end{enumerate}
\end{theorem}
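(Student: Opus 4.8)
The plan is to derive every item of the theorem from Theorem~\ref{theorem: HFHSA} together with the characterization noted just after the definition: $\varphi$ is a fuzzy bisimulation between $\mA$ and $\mAp$ exactly when $\varphi$ is a fuzzy simulation between $\mA$ and $\mAp$ and $\varphi^{-1}$ is a fuzzy simulation between $\mAp$ and $\mA$. Throughout I will use the routine identities $(\varphi^{-1})^{-1} = \varphi$, $(\varphi \circ \psi)^{-1} = \psi^{-1} \circ \varphi^{-1}$ and $(\bigcup\Phi)^{-1} = \bigcup\{\varphi^{-1} \mid \varphi \in \Phi\}$, and the fact that $\varphi \leq \psi$ iff $\varphi^{-1} \leq \psi^{-1}$, and that $\normBS{\varphi}{\mA}{\mAp}$ is by definition the meet $\normS{\varphi}{\mA}{\mAp} \land \normS{\varphi^{-1}}{\mAp}{\mA}$.

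For item~1, I would apply Theorem~\ref{theorem: HFHSA}(1) once to $\varphi \leq \psi$ and once to $\varphi^{-1} \leq \psi^{-1}$, then take the meet of the two resulting inequalities. Item~2 is immediate from the definition and $(\varphi^{-1})^{-1} = \varphi$, and the norm equality is just commutativity of $\land$ in the defining formula for $\nZbs$. For item~3, note that $\varphi \circ \psi$ and $(\varphi \circ \psi)^{-1} = \psi^{-1} \circ \varphi^{-1}$ are composites of fuzzy simulations, so both are fuzzy simulations by Theorem~\ref{theorem: HFHSA}(2), whence $\varphi \circ \psi$ is a fuzzy bisimulation. For the norm inequality~\eqref{eq: DJHSD2}, abbreviate $p = \normS{\varphi}{\mA}{\mAp}$, $q = \normS{\varphi^{-1}}{\mAp}{\mA}$, $r = \normS{\psi}{\mAp}{\mAdp}$, $t = \normS{\psi^{-1}}{\mAdp}{\mAp}$; using $p \land q \leq p$, $r \land t \leq r$, monotonicity of $\fand$~\eqref{fop: GDJSK 10} and~\eqref{eq: DJHSD} one gets $(p \land q) \fand (r \land t) \leq p \fand r \leq \normS{\varphi \circ \psi}{\mA}{\mAdp}$, and symmetrically $(p \land q) \fand (r \land t) \leq q \fand t = t \fand q \leq \normS{\psi^{-1} \circ \varphi^{-1}}{\mAdp}{\mA} = \normS{(\varphi \circ \psi)^{-1}}{\mAdp}{\mA}$; since the right-hand side of~\eqref{eq: DJHSD2} is the meet of these two bounds, we are done.

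Item~4 follows by applying Theorem~\ref{theorem: HFHSA}(3) both to $\Phi$ and to $\{\varphi^{-1} \mid \varphi \in \Phi\}$, and using $(\bigcup\Phi)^{-1} = \bigcup\{\varphi^{-1} \mid \varphi \in \Phi\}$. Item~5 is then item~4 applied to the set of all fuzzy bisimulations between $\mA$ and $\mAp$, which is nonempty by Remark~\ref{remark: HJFSA}. For item~6, let $\varphi$ be the greatest fuzzy auto-bisimulation of $\mA$: reflexivity holds because $id_A$ is a fuzzy auto-bisimulation (Remark~\ref{remark: HJFSA}), so $id_A \leq \varphi$; symmetry holds because by item~2 the converse $\varphi^{-1}$ is again a fuzzy auto-bisimulation, so $\varphi^{-1} \leq \varphi$ and hence $\varphi \leq \varphi^{-1}$, i.e.\ $\varphi = \varphi^{-1}$; transitivity holds because by item~3 the composite $\varphi \circ \varphi$ is a fuzzy auto-bisimulation, so $\varphi \circ \varphi \leq \varphi$; and since $\normBS{id_A}{\mA}{\mA} = \normS{id_A}{\mA}{\mA} \land \normS{id_A}{\mA}{\mA} = 1$, item~1 forces the norm of $\varphi$ to be $1$. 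The only step that needs more than pure unwinding is the norm inequality~\eqref{eq: DJHSD2} in item~3, where the meet defining $\nZbs$ must be interleaved with the monoid operation $\fand$; this is the part I expect to be the mild obstacle, handled by monotonicity of $\fand$ and $a \land b \leq a,b$, but worth spelling out.
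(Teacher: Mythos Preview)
Your proposal is correct and follows essentially the same route as the paper: every item is reduced to the corresponding assertion of Theorem~\ref{theorem: HFHSA} via the characterization of a fuzzy bisimulation as a pair of fuzzy simulations, with the norm inequality~\eqref{eq: DJHSD2} obtained from~\eqref{eq: DJHSD} and monotonicity of~$\fand$. Your write-up is simply more explicit than the paper's (particularly in spelling out the $p,q,r,t$ argument for item~3 and the reflexivity/symmetry/transitivity steps for item~6), but the underlying argument is the same.
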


The first assertion follows from~\eqref{fop: GDJSK 10} and~\eqref{fop: GDJSK 20}. The second assertion follows directly from the definition of fuzzy bisimulations. The third and fourth assertions follow from the second and third assertions of Theorem~\ref{theorem: HFHSA}, respectively. In particular, \eqref{eq: DJHSD2} follows from \eqref{eq: DJHSD} and~\eqref{fop: GDJSK 10}. 
The fifth assertion follows from the fourth one. 
The sixth assertion follows from the first three assertions and Remark~\ref{remark: HJFSA} (observe that $\normBS{id_A}{\mA}{\mA} = 1$). 

\begin{corollary}\label{cor: HDHGS}
Let $\mA_1$, $\mA_2$, $\mAp_1$ and $\mAp_2$ be fuzzy automata. Suppose that there exist bisimulations between $\mA_1$ and $\mA_2$ as well as between $\mAp_1$ and $\mAp_2$. Let $\varphi_1$ (respectively, $\varphi_2$) be the greatest fuzzy bisimulation between $\mA_1$ and $\mAp_1$ (respectively, $\mA_2$ and $\mAp_2$). Then, $\normBS{\varphi_1}{\mA_1}{\mAp_1} = \normBS{\varphi_2}{\mA_2}{\mAp_2}$.  
\end{corollary}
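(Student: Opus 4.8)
The plan is to derive each of the two inequalities $\normBS{\varphi_1}{\mA_1}{\mAp_1} \leq \normBS{\varphi_2}{\mA_2}{\mAp_2}$ and $\normBS{\varphi_2}{\mA_2}{\mAp_2} \leq \normBS{\varphi_1}{\mA_1}{\mAp_1}$ by transporting $\varphi_1$ (respectively $\varphi_2$) along the given crisp bisimulations and then invoking maximality of $\varphi_2$ (respectively $\varphi_1$). First I would fix a bisimulation $\theta$ between $\mA_1$ and $\mA_2$ and a bisimulation $\eta$ between $\mAp_1$ and $\mAp_2$, as provided by the hypothesis. By the definition of bisimulation in the sense of~\cite{CiricIDB12}, $\theta^{-1}$ is a bisimulation between $\mA_2$ and $\mA_1$ and $\eta^{-1}$ is a bisimulation between $\mAp_2$ and $\mAp_1$; and by the first point of Remark~\ref{remark: HJFSA} each of $\theta, \theta^{-1}, \eta, \eta^{-1}$ is a fuzzy bisimulation whose norm equals~$1$.

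Next I would form the composition $\theta^{-1} \circ \varphi_1 \circ \eta$; tracking the underlying state sets, it is a fuzzy relation between $\mA_2$ and $\mAp_2$. Applying the third assertion of Theorem~\ref{theorem: HFLKA} twice, it is a fuzzy bisimulation between $\mA_2$ and $\mAp_2$ and, by monotonicity of $\fand$,
\[ \normBS{\theta^{-1}}{\mA_2}{\mA_1} \fand \normBS{\varphi_1}{\mA_1}{\mAp_1} \fand \normBS{\eta}{\mAp_1}{\mAp_2} \ \leq\ \normBS{\theta^{-1} \circ \varphi_1 \circ \eta}{\mA_2}{\mAp_2}. \]
Since the first and third factors on the left equal~$1$ and $1$ is the unit of $\fand$, the left-hand side is just $\normBS{\varphi_1}{\mA_1}{\mAp_1}$. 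As $\varphi_2$ is the greatest fuzzy bisimulation between $\mA_2$ and $\mAp_2$, we have $\theta^{-1} \circ \varphi_1 \circ \eta \leq \varphi_2$, whence the first assertion of Theorem~\ref{theorem: HFLKA} gives $\normBS{\theta^{-1} \circ \varphi_1 \circ \eta}{\mA_2}{\mAp_2} \leq \normBS{\varphi_2}{\mA_2}{\mAp_2}$. Chaining these yields $\normBS{\varphi_1}{\mA_1}{\mAp_1} \leq \normBS{\varphi_2}{\mA_2}{\mAp_2}$, and the reverse inequality follows from the symmetric argument applied to $\theta \circ \varphi_2 \circ \eta^{-1}$, using maximality of $\varphi_1$.

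I do not expect a genuine obstacle here; the only points needing care are the directions of the converse relations (so that domains match up under composition) and the observation that prefixing and postfixing with a norm-$1$ fuzzy bisimulation does not decrease the norm, which is exactly the unit law $1 \fand a = a$ combined with Theorem~\ref{theorem: HFLKA}(3). It is also worth stating explicitly that the hypothesis that such crisp bisimulations exist is used precisely to make those norms equal to~$1$; without it, the composition argument would only relate the two norms up to the norms of whatever fuzzy bisimulations happen to be available, not give equality.
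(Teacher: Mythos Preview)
Your proposal is correct and follows essentially the same approach as the paper: both arguments transport the greatest fuzzy bisimulation along the given (norm-$1$) bisimulations via composition, then use Theorem~\ref{theorem: HFLKA}(3) for the norm bound and Theorem~\ref{theorem: HFLKA}(1) together with maximality for the comparison. The only cosmetic difference is that the paper writes out the direction $\normBS{\varphi_2}{\mA_2}{\mAp_2} \leq \normBS{\varphi_1}{\mA_1}{\mAp_1}$ first (using $\psi \circ \varphi_2 \circ (\psi')^{-1}$) and leaves the other to symmetry, whereas you start with the opposite inequality.
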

	
\newcommand{\ProofCorollaryHDHGS}{Let $\psi$ be a bisimulation between $\mA_1$ and $\mA_2$, and $\psi'$ a bisimulation between $\mAp_1$ and $\mAp_2$. By Remark~\ref{remark: HJFSA}, $\psi$ is a fuzzy bisimulation between $\mA_1$ and $\mA_2$ with $\normBS{\psi}{\mA_1}{\mA_2} = 1$, and similarly, $\psi'$ is a fuzzy bisimulation between $\mAp_1$ and $\mAp_2$ with $\normBS{\psi'}{\mAp_1}{\mAp_2} = 1$. By Theorem~\ref{theorem: HFLKA}, $\psi \circ \varphi_2 \circ (\psi')^{-1}$ is a fuzzy bisimulation between $\mA_1$ and $\mAp_1$. Therefore, also by Theorem~\ref{theorem: HFLKA}, 
\[
	\normBS{\varphi_2}{\mA_2}{\mAp_2} \leq \normBS{\psi \circ \varphi_2 \circ (\psi')^{-1}}{\mA_1}{\mAp_1} \leq \normBS{\varphi_1}{\mA_1}{\mAp_1}.
\]
Similarly, it can be shown that $\normBS{\varphi_1}{\mA_1}{\mAp_1} \leq \normBS{\varphi_2}{\mA_2}{\mAp_2}$. 
Therefore, $\normBS{\varphi_1}{\mA_1}{\mAp_1} = \normBS{\varphi_2}{\mA_2}{\mAp_2}$.
}

\begin{proof}
\ProofCorollaryHDHGS
\myend
\end{proof}

Apart from Corollary~\ref{cor: HDHGS} and the first point of Remark~\ref{remark: HJFSA}, the following theorem also concerns the relationship between our notion of fuzzy bisimulation and the notion of (forward) bisimulation introduced by {\'C}iri{\'c} et al.\ in~\cite{CiricIDB12}. It is a counterpart of Theorem~\ref{theorem: JHFHA}. 

\begin{theorem}\label{theorem: JHFHA2}
Let $\mA$ and $\mAp$ be fuzzy automata. 
\begin{enumerate}
\item A fuzzy bisimulation $\varphi$ between $\mA$ and $\mAp$ is a bisimulation between them iff $\nZbs = 1$. 
\item Let $\varphi$ be the greatest fuzzy bisimulation between $\mA$ and $\mAp$. If $\nZbs = 1$, then $\varphi$ is also the greatest bisimulation between $\mA$ and $\mAp$. Otherwise, there are no bisimulations between $\mA$ and $\mAp$. 
\end{enumerate}
\end{theorem}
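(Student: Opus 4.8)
The plan is to mirror the argument for Theorem~\ref{theorem: JHFHA} almost verbatim, replacing simulations by bisimulations and the single norm condition by the pair of norm conditions. For the first assertion, I would start by unfolding the definition of the norm: $\nZbs = S(\sigma^\mA, \sigmaAp \circ \varphi^{-1}) \land S(\sigmaAp, \sigmaA \circ \varphi)$, so $\nZbs = 1$ holds iff both conjuncts equal~$1$. Then, as in the proof of Theorem~\ref{theorem: JHFHA}, I would observe that $S(\sigma^\mA, \sigmaAp \circ \varphi^{-1}) = 1$ is equivalent, via~\eqref{fop: GDJSK 30}, to $\sigma^\mA(x) \leq (\sigmaAp \circ \varphi^{-1})(x)$ for all relevant~$x$, i.e.\ to condition~\eqref{eq: HFHAJ 1}; and similarly $S(\sigmaAp, \sigmaA \circ \varphi) = 1$ is equivalent to condition~\eqref{eq: HFHAJ 4}. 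Since, by the remark following the definition of fuzzy bisimulation, a fuzzy relation is a fuzzy bisimulation between $\mA$ and $\mAp$ iff it satisfies \eqref{eq: HFHAJ 2}, \eqref{eq: HFHAJ 3}, \eqref{eq: HFHAJ 5} and \eqref{eq: HFHAJ 6}, a fuzzy bisimulation $\varphi$ is a bisimulation iff it additionally satisfies \eqref{eq: HFHAJ 1} and \eqref{eq: HFHAJ 4}, which by the above is exactly the condition $\nZbs = 1$.

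For the second assertion I would split into the two cases stated. If $\nZbs = 1$, then by the first assertion $\varphi$ is a bisimulation between $\mA$ and $\mAp$; and since, by Remark~\ref{remark: HJFSA}, every bisimulation between $\mA$ and $\mAp$ is a fuzzy bisimulation between them, and $\varphi$ is the greatest fuzzy bisimulation, every bisimulation is $\leq \varphi$; hence $\varphi$ is the greatest bisimulation. If instead $\nZbs \neq 1$, suppose toward a contradiction that some bisimulation $\psi$ between $\mA$ and $\mAp$ exists. Then $\psi$ is a fuzzy bisimulation with $\normBS{\psi}{\mA}{\mAp} = 1$ (Remark~\ref{remark: HJFSA}), so $\psi \leq \varphi$ by maximality of $\varphi$, and by the monotonicity of the norm established in Theorem~\ref{theorem: HFLKA}(1) we get $1 = \normBS{\psi}{\mA}{\mAp} \leq \nZbs$, contradicting $\nZbs \neq 1$.

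I expect no real obstacle here: the argument is a direct analogue of Theorem~\ref{theorem: JHFHA}, the only mild care being to handle both conjuncts of the bisimulation norm symmetrically and to invoke the characterization of fuzzy bisimulations by the four conditions \eqref{eq: HFHAJ 2}, \eqref{eq: HFHAJ 3}, \eqref{eq: HFHAJ 5}, \eqref{eq: HFHAJ 6} rather than re-deriving it.
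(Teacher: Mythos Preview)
Your proposal is correct and follows essentially the same route as the paper: unfold $\nZbs$ into its two conjuncts, use~\eqref{fop: GDJSK 30} to identify them with conditions~\eqref{eq: HFHAJ 1} and~\eqref{eq: HFHAJ 4}, and combine with the four conditions already satisfied by a fuzzy bisimulation; then derive the second assertion from the first via Remark~\ref{remark: HJFSA} and monotonicity of the norm. The paper's treatment of the second assertion is terser (it does not explicitly cite Theorem~\ref{theorem: HFLKA}(1)), but the underlying argument is the same as yours.
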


\newcommand{\ProofTheoremJHFHAt}{Consider the first assertion. We have $\nZbs = 1$ iff $\nZs = 1$ and $\normS{\varphi^{-1}}{\mAp}{\mA} = 1$, i.e., iff the following conditions hold for every $w \in \Sigma^*$: 
	\begin{eqnarray*}
		(\sigma^\mA(w) \fto (\sigmaAp \circ \varphi^{-1})(w)) & = & 1 \\  
		(\sigma^\mAp(w) \fto (\sigmaA \circ \varphi)(w)) & = & 1.
	\end{eqnarray*}
	By~\eqref{fop: GDJSK 30}, these equalities are equivalent to 
	\begin{eqnarray*}
		\sigma^\mA(w) & \leq & (\sigmaAp \circ \varphi^{-1})(w) \\
		\sigma^\mAp(w) & \leq & (\sigmaA \circ \varphi)(w), 
	\end{eqnarray*}
	respectively. That is, $\nZbs = 1$ iff \eqref{eq: HFHAJ 1} and~\eqref{eq: HFHAJ 4} hold. The first assertion of the theorem follows from this and the fact that \eqref{eq: HFHAJ 2}, \eqref{eq: HFHAJ 3}, \eqref{eq: HFHAJ 5} and \eqref{eq: HFHAJ 6} hold by assumption. The second assertion follows from the first one and the fact that, if $\psi$ is a bisimulation between $\mA$ and $\mAp$, then it is also a fuzzy bisimulation between $\mA$ and~$\mAp$ with $\normBS{\psi}{\mA}{\mAp} = 1$.	
}

\begin{proof}
\ProofTheoremJHFHAt
\myend
\end{proof}

By the above theorem and Example~\ref{example: KSNAO}, when $L = [0,1]$ and $\fand$ is the G\"odel or product t-norm, there are no bisimulations between the fuzzy automata $\mA$ and $\mAp$ specified in Example~\ref{example: HGDSJ} and illustrated in Fig.~\ref{fig: HFKWS}. 

The following corollary is a counterpart of Corollary~\ref{cor: HDKJW}. 

\begin{corollary}\label{cor: HDKJW2}
The greatest fuzzy auto-bisimulation of a fuzzy automaton $\mA$ is equal to the greatest auto-bisimulation of~$\mA$.
\end{corollary}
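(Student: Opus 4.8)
The plan is to mirror, step for step, the proof of Corollary~\ref{cor: HDKJW}, replacing simulations by bisimulations throughout. Let $\varphi$ be the greatest fuzzy auto-bisimulation of $\mA$, which exists by the fifth assertion of Theorem~\ref{theorem: HFLKA}. First I would recall from Remark~\ref{remark: HJFSA} that $id_A$ is a fuzzy auto-bisimulation of $\mA$, and that $\normBS{id_A}{\mA}{\mA} = 1$ (this is immediate from the definition of the norm, since $\sigma^\mA \leq \sigma^\mA \circ id_A$ and $S(\sigma^\mA,\sigma^\mA) = 1$). Since $\varphi$ is the greatest fuzzy auto-bisimulation, $id_A \leq \varphi$, so by the first assertion of Theorem~\ref{theorem: HFLKA} we get $1 = \normBS{id_A}{\mA}{\mA} \leq \normBS{\varphi}{\mA}{\mA}$, hence $\normBS{\varphi}{\mA}{\mA} = 1$.

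Next I would apply the first assertion of Theorem~\ref{theorem: JHFHA2}: a fuzzy bisimulation whose norm equals $1$ is in fact a bisimulation. So $\varphi$ is an auto-bisimulation of $\mA$. Conversely, by the first point of Remark~\ref{remark: HJFSA}, every auto-bisimulation of $\mA$ is a fuzzy auto-bisimulation of $\mA$, hence is $\leq \varphi$. Therefore $\varphi$ is the greatest auto-bisimulation of $\mA$, which is exactly the claim. (Equivalently, one can just cite the second assertion of Theorem~\ref{theorem: JHFHA2} once $\normBS{\varphi}{\mA}{\mA}=1$ is established.)

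I do not expect any real obstacle here; the statement is a direct corollary and the argument is a routine adaptation of Corollary~\ref{cor: HDKJW}. The only point that warrants a sentence of care is the verification that $\normBS{id_A}{\mA}{\mA} = 1$, and the observation that the monotonicity of the bisimulation norm (Theorem~\ref{theorem: HFLKA}, assertion~1) is what forces $\normBS{\varphi}{\mA}{\mA}$ up to $1$ rather than leaving it possibly smaller.
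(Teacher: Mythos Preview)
Your proposal is correct and follows essentially the same approach as the paper's proof: establish $\normBS{\varphi}{\mA}{\mA}=1$ via $id_A \leq \varphi$ and monotonicity of the norm (Theorem~\ref{theorem: HFLKA}), then invoke Theorem~\ref{theorem: JHFHA2}. The paper is slightly terser, citing Theorem~\ref{theorem: JHFHA2} directly (its second assertion) once the norm is shown to be~$1$, which is exactly the alternative you note at the end.
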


\newcommand{\ProofCorollaryHDKJWt}{Let $\varphi$ be the greatest fuzzy auto-bisimulation of~$\mA$. By Remark~\ref{remark: HJFSA} and Theorem~\ref{theorem: HFLKA}, $id_A$ is a fuzzy auto-bisimulation of $\mA$ and $\normBS{id_A}{\mA}{\mA} \leq \normBS{\varphi}{\mA}{\mA}$. Observe that $\normBS{id_A}{\mA}{\mA} = 1$. Hence, $\normBS{\varphi}{\mA}{\mA} = 1$. By Theorem~\ref{theorem: JHFHA2}, it follows that $\varphi$ is the greatest auto-bisimulation of~$\mA$.
} 

\begin{proof}
\ProofCorollaryHDKJWt
\myend
\end{proof}

The following theorem concerns the relationship between our notion of fuzzy bisimulation and the notion of approximate (forward) bisimulation introduced by Stanimirovi{\'c} et al.\ in~\cite{SMC.20} for the case where $\mL$ is a complete Heyting algebra. It is a counterpart of Theorem~\ref{theorem: GHFSA} and is proved analogously. 

\begin{theorem}\label{theorem: GHFSAt}
Suppose $\mL$ is a complete Heyting algebra. Let $\varphi$ be the greatest fuzzy bisimulation between fuzzy automata $\mA$ and $\mAp$. 
\begin{enumerate}
\item For every $\lambda \in L$, there exists a $\lambda$-approximate bisimulation between $\mA$ and $\mAp$ iff $\lambda \leq \nZbs$. 
\item For $\lambda = \nZbs$ and $\psi$ being the greatest $\lambda$-approximate bisimulation between $\mA$ and $\mAp$, we have $\varphi \leq \psi$. 
\end{enumerate}
\end{theorem}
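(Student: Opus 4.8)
The plan is to run the proof of Theorem~\ref{theorem: GHFSA} twice, once for each direction of the bisimulation conditions, using that a complete Heyting algebra is a frame, so $\fand = \land$ distributes over arbitrary suprema and hence commutes, up to meeting with $\lambda$, with all the compositions that occur. For the first assertion, in the ``if'' direction I would observe that $\lambda \leq \nZbs$ unfolds to $\lambda \leq S(\sigma^\mA,\sigmaAp\circ\varphi^{-1})$ and $\lambda \leq S(\sigmaAp,\sigmaA\circ\varphi)$, i.e.\ \eqref{eq: A-HFHAJ 1} and \eqref{eq: A-HFHAJ 4}, while \eqref{eq: A-HFHAJ 2}, \eqref{eq: A-HFHAJ 3}, \eqref{eq: A-HFHAJ 5}, \eqref{eq: A-HFHAJ 6} follow from \eqref{eq: HFHAJ 2}, \eqref{eq: HFHAJ 3}, \eqref{eq: HFHAJ 5}, \eqref{eq: HFHAJ 6} (each of the latter makes the corresponding $S(\cdot,\cdot)$ equal to $1 \geq \lambda$); so $\varphi$ itself is a $\lambda$-approximate bisimulation. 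For the ``only if'' direction, given a $\lambda$-approximate bisimulation $\psi$, I would set $\xi = \lambda \land \psi$ (noting $\xi^{-1} = \lambda \land \psi^{-1}$), use $\fand=\land$ and frame distributivity to rewrite $\xi^{-1}\circ\deltaA_s = \lambda\land(\psi^{-1}\circ\deltaA_s)$, $\deltaAp_s\circ\xi^{-1} = \lambda\land(\deltaAp_s\circ\psi^{-1})$, and likewise for $\tauA$ and for the $\xi$-sided compositions, and deduce that $\xi$ satisfies \eqref{eq: HFHAJ 2}, \eqref{eq: HFHAJ 3}, \eqref{eq: HFHAJ 5}, \eqref{eq: HFHAJ 6} as well as $\lambda \leq S(\sigma^\mA,\sigmaAp\circ\xi^{-1})$ and $\lambda \leq S(\sigmaAp,\sigmaA\circ\xi)$; hence $\xi$ is a fuzzy bisimulation with $\lambda \leq \normBS{\xi}{\mA}{\mAp}$, so $\xi \leq \varphi$, and by Theorem~\ref{theorem: HFLKA}, $\lambda \leq \normBS{\xi}{\mA}{\mAp} \leq \normBS{\varphi}{\mA}{\mAp} = \nZbs$.

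For the second assertion I would take $\lambda = \nZbs$ and define $\xi(x,x') = 1$ when $\varphi(x,x') \geq \lambda$ and $\xi(x,x') = \varphi(x,x')$ otherwise, so that $\varphi \leq \xi$ and $\lambda \land \xi = \lambda \land \varphi$, hence also $\lambda \land \xi^{-1} = \lambda \land \varphi^{-1}$. Then, exactly as in the proof of Theorem~\ref{theorem: GHFSA}, meeting with $\lambda$ before composing gives $\lambda\land(\xi^{-1}\circ\deltaA_s) = \lambda\land(\varphi^{-1}\circ\deltaA_s)$ and the three analogous identities, and combining these with $\fand=\land$ and the fact that $\varphi$ satisfies \eqref{eq: HFHAJ 2}, \eqref{eq: HFHAJ 3}, \eqref{eq: HFHAJ 5}, \eqref{eq: HFHAJ 6} yields \eqref{eq: A-HFHAJ 2}, \eqref{eq: A-HFHAJ 3}, \eqref{eq: A-HFHAJ 5}, \eqref{eq: A-HFHAJ 6} for $\xi$. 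For \eqref{eq: A-HFHAJ 1} and \eqref{eq: A-HFHAJ 4} I would use $\varphi \leq \xi$, so $\sigmaAp\circ\varphi^{-1} \leq \sigmaAp\circ\xi^{-1}$ and $\sigmaA\circ\varphi \leq \sigmaA\circ\xi$, together with $S(\sigma^\mA,\sigmaAp\circ\varphi^{-1}) \land S(\sigmaAp,\sigmaA\circ\varphi) = \nZbs = \lambda$. Thus $\xi$ is a $\lambda$-approximate bisimulation, so $\xi \leq \psi$, and therefore $\varphi \leq \psi$.

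The only point that needs care is the bookkeeping with converses and the two-sidedness of the bisimulation conditions: one must check that $\xi^{-1}$ behaves like the truncation of $\psi^{-1}$ (respectively $\varphi^{-1}$) and that both the ``forward'' and the ``backward'' families of inequalities get treated. The frame distributivity ``$\lambda \land \bigvee B = \bigvee_{b\in B}(\lambda\land b)$'' used to push $\lambda$ through a composition is the same fact already exploited in the proof of Theorem~\ref{theorem: GHFSA}, and no new idea is required — which is why the theorem is proved analogously. I expect this bookkeeping, rather than any conceptual step, to be the bulk of the write-up.
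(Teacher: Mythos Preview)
Your proposal is correct and follows essentially the same argument as the paper's proof: the same ``if'' direction via $\varphi$ itself, the same $\xi = \lambda \land \psi$ construction for ``only if'', and the same threshold-truncated $\xi$ for the second assertion, with the bisimulation case amounting to checking the backward conditions \eqref{eq: HFHAJ 5}, \eqref{eq: HFHAJ 6} and \eqref{eq: A-HFHAJ 4}--\eqref{eq: A-HFHAJ 6} alongside the forward ones. Your write-up is in fact slightly more explicit than the paper's about the role of frame distributivity and the use of $\varphi \leq \xi$ for the $\sigma$-conditions, but no different in substance.
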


\begin{proof}
Consider the first assertion. For the ``if'' direction, assume that $\lambda \leq \nZbs$, which implies~\eqref{eq: A-HFHAJ 1} and~\eqref{eq: A-HFHAJ 4}. The conditions~\eqref{eq: A-HFHAJ 2}, \eqref{eq: A-HFHAJ 3}, \eqref{eq: A-HFHAJ 5} and \eqref{eq: A-HFHAJ 6} follow from~\eqref{eq: HFHAJ 2}, \eqref{eq: HFHAJ 3}, \eqref{eq: HFHAJ 5} and~\eqref{eq: HFHAJ 6}, respectively. Therefore, $\varphi$ is a $\lambda$-approximate bisimulation between $\mA$ and $\mAp$. For the ``only if'' direction, assume that $\psi$ is a $\lambda$-approximate bisimulation between $\mA$ and $\mAp$. 
Thus, for every $s \in \Sigma$, 
\comment{
\begin{eqnarray*}
\lambda & \leq & S(\sigma^\mA, \sigmaAp \circ \psi^{-1}) \\
\lambda & \leq & S(\psi^{-1} \circ \deltaA_s, \deltaAp_s \circ \psi^{-1}) \\
\lambda & \leq & S(\psi^{-1} \circ \tauA, \tauAp) \\[0.5ex]
\lambda & \leq & S(\sigmaAp, \sigmaA \circ \psi) \\
\lambda & \leq & S(\psi \circ \deltaAp_s, \deltaA_s \circ \psi) \\
\lambda & \leq & S(\psi \circ \tauAp, \tauA).  
\end{eqnarray*}
}
\[
\begin{array}{ll}
\lambda \leq S(\sigma^\mA, \sigmaAp \circ \psi^{-1}) &
\lambda \leq S(\sigmaAp, \sigmaA \circ \psi) \\[0.5ex]
\lambda \leq S(\psi^{-1} \circ \deltaA_s, \deltaAp_s \circ \psi^{-1})\qquad\qquad &
\lambda \leq S(\psi \circ \deltaAp_s, \deltaA_s \circ \psi) \\[0.5ex]
\lambda \leq S(\psi^{-1} \circ \tauA, \tauAp) &
\lambda \leq S(\psi \circ \tauAp, \tauA).  
\end{array}
\]
Let $\xi = \lambda \land \psi$. Since $\fand = \land$, from the above inequalities it follows that 
\comment{
\begin{eqnarray*}
\lambda & \leq & S(\sigma^\mA, \sigmaAp \circ \xi^{-1})\qquad\qquad \\
\xi^{-1} \circ \deltaA_s & \leq & \deltaAp_s \circ \xi^{-1} \\
\xi^{-1} \circ \tauA & \leq & \tauAp \\[0.5ex]
\lambda & \leq & S(\sigmaAp, \sigmaA \circ \xi) \\
\xi \circ \deltaAp_s & \leq & \deltaA_s \circ \xi \\
\xi \circ \tauAp & \leq & \tauA. 
\end{eqnarray*}
}
\[
\begin{array}{rclrcl}
\lambda & \!\!\!\leq\!\!\! & S(\sigma^\mA, \sigmaAp \circ \xi^{-1}) &
\lambda & \!\!\!\leq\!\!\! & S(\sigmaAp, \sigmaA \circ \xi) \\[0.5ex]
\xi^{-1} \circ \deltaA_s & \!\!\!\leq\!\!\! & \deltaAp_s \circ \xi^{-1} &
\xi \circ \deltaAp_s & \!\!\!\leq\!\!\! & \deltaA_s \circ \xi \\[0.5ex]
\xi^{-1} \circ \tauA & \!\!\!\leq\!\!\! & \tauAp &
\xi \circ \tauAp & \!\!\!\leq\!\!\! & \tauA. 
\end{array}
\]
That is, $\xi$ is a fuzzy bisimulation between $\mA$ and $\mAp$ with \mbox{$\lambda \leq \normBS{\xi}{\mA}{\mAp}$}. Thus, $\xi \leq \varphi$ and, by Theorem~\ref{theorem: HFLKA}, it follows that $\lambda \leq \normBS{\xi}{\mA}{\mAp} \leq \nZs$. 
	
Consider the second assertion. Let $\xi: A \times A' \to L$ be the fuzzy relation defined as follows: for $\tuple{x,x'} \in A \times A'$, $\xi(x,x') = 1$ if $\varphi(x,x') \geq \lambda$, and $\xi(x,x') = \varphi(x,x')$ otherwise. Thus, $\varphi \leq \xi$ and 
\begin{eqnarray*}
\lambda \land (\xi^{-1} \circ \deltaA_s) & \!\!\!=\!\!\! & \lambda \land (\varphi^{-1} \circ \deltaA_s) \\
\lambda \land (\xi^{-1} \circ \tauA) & \!\!\!=\!\!\! & \lambda \land (\varphi^{-1} \circ \tauA) \\[0.5ex]
\lambda \land (\xi \circ \deltaAp_s) & \!\!\!=\!\!\! & \lambda \land (\varphi \circ \deltaAp_s) \\
\lambda \land (\xi \circ \tauAp) & \!\!\!=\!\!\! & \lambda \land (\varphi \circ \tauAp). 
\end{eqnarray*}
Since $\fand = \land$ and $\varphi$ satisfies~\eqref{eq: HFHAJ 2}, \eqref{eq: HFHAJ 3}, \eqref{eq: HFHAJ 5} and \eqref{eq: HFHAJ 6}, it follows that 
\begin{eqnarray*}
\lambda & \leq & S(\xi^{-1} \circ \deltaA_s, \deltaAp_s \circ \xi^{-1}) \\
\lambda & \leq & S(\xi^{-1} \circ \tauA, \tauAp) \\[0.5ex]
\lambda & \leq & S(\xi \circ \deltaAp_s, \deltaA_s \circ \xi) \\
\lambda & \leq & S(\xi \circ \tauAp, \tauA). 
\end{eqnarray*}
Since $\nZbs = \lambda$, we also have 
\begin{eqnarray*}
\lambda & \leq & S(\sigma^\mA, \sigmaAp \circ \xi^{-1}) \\ 
\lambda & \leq & S(\sigmaAp, \sigmaA \circ \varphi)
\end{eqnarray*}
Therefore, $\xi$ is a $\lambda$-approximate bisimulation between $\mA$ and $\mAp$. Hence, $\xi \leq \psi$, which implies $\varphi \leq \psi$ since $\varphi \leq \xi$. 
\myend
\end{proof}

\begin{example}\label{example: JHHSK}
Let $L = [0,1]$ and let $\fand$ be the G\"odel t-norm. Thus, $\mL$ is a complete Heyting algebra. Let $\mA$ and $\mAp$ be the fuzzy automata specified in Example~\ref{example: HGDSJ} and 
Fig.~\ref{fig: HFKWS}. As stated in Example~\ref{example: KSNAO}, $\varphi = \{\tuple{u,u'}\!:\!0.6$, $\tuple{v,v'}\!:\!1$, $\tuple{v,w'}\!:\!0.6$, $\tuple{w,v'}\!:\!0.6$, $\tuple{w,w'}\!:\!1\}$ is the greatest fuzzy bisimulation between $\mA$ and $\mAp$ and $\nZbs = 0.6$. By the proof of Theorem~\ref{theorem: GHFSAt}, $\xi = \{\tuple{u,u'}\!:\!1$, $\tuple{v,v'}\!:\!1$, $\tuple{v,w'}\!:\!1$, $\tuple{w,v'}\!:\!1$, $\tuple{w,w'}\!:\!1\}$ is a $0.6$-approximate bisimulation between $\mA$ and $\mAp$. It can be checked that $\xi$ is the greatest $0.6$-approximate bisimulation between $\mA$ and $\mAp$. Note that $\varphi < \xi$.
\myend
\end{example}

\subsection{Invariance of the Recognized Language}

In this subsection, we present results stating that the fuzzy language recognized by a fuzzy automaton is fuzzily invariant under fuzzy bisimulations. 

The following lemma states that, if $\tuple{x,x'} \in A \times A'$ and $\varphi$ is a fuzzy bisimulation between $\mA$ and $\mAp$, then the fuzzy degree in which the fuzzy language recognized by $\mA_x$ is equal to the fuzzy language recognized by $\mAp_{x'}$ is greater than or equal to $\varphi(x,x')$. It follows directly from Lemma~\ref{lemma: FJHSS} and the second assertion of Theorem~\ref{theorem: HFLKA}. 

\begin{lemma}\label{lemma: FJHSS2}
If $\varphi$ is a fuzzy bisimulation between fuzzy automata $\mA$ and $\mAp$, then for every $\tuple{x,x'} \in A \times A'$: 
\[
	\varphi(x,x') \leq E(\bL(\mA,x), \bL(\mAp,x')).
\]
\end{lemma}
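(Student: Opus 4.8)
The plan is to reduce the claim to Lemma~\ref{lemma: FJHSS}, applying it once to $\varphi$ and once to $\varphi^{-1}$, after first recording the elementary identity $E(f,g) = S(f,g) \land S(g,f)$ for any fuzzy subsets $f,g$ of a common set $X$.

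First I would establish that identity. Unfolding the definitions, $E(f,g) = \bigwedge_{x \in X}(f(x) \fequiv g(x)) = \bigwedge_{x \in X}\big((f(x) \fto g(x)) \land (g(x) \fto f(x))\big)$, and since in the complete lattice $\tuple{L,\leq,0,1}$ the infimum of a family of binary meets equals the meet of the two componentwise infima, this equals $\big(\bigwedge_{x \in X} (f(x)\fto g(x))\big) \land \big(\bigwedge_{x \in X} (g(x)\fto f(x))\big) = S(f,g) \land S(g,f)$.

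Next I would use that $\varphi$ is a fuzzy bisimulation between $\mA$ and $\mAp$: by definition $\varphi$ is a fuzzy simulation between $\mA$ and $\mAp$, and, by the second assertion of Theorem~\ref{theorem: HFLKA} (equivalently, directly by the definition of fuzzy bisimulation), $\varphi^{-1}$ is a fuzzy simulation between $\mAp$ and $\mA$. Fix $\tuple{x,x'} \in A \times A'$. Applying Lemma~\ref{lemma: FJHSS} to $\varphi$ yields $\varphi(x,x') \leq S(\bL(\mA,x), \bL(\mAp,x'))$. Applying Lemma~\ref{lemma: FJHSS} to $\varphi^{-1}$ at the pair $\tuple{x',x}$ yields $\varphi^{-1}(x',x) \leq S(\bL(\mAp,x'), \bL(\mA,x))$; since $\varphi^{-1}(x',x) = \varphi(x,x')$ by the definition of the converse, this gives $\varphi(x,x') \leq S(\bL(\mAp,x'), \bL(\mA,x))$.

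Combining the two inequalities with the identity from the first step, $\varphi(x,x') \leq S(\bL(\mA,x), \bL(\mAp,x')) \land S(\bL(\mAp,x'), \bL(\mA,x)) = E(\bL(\mA,x), \bL(\mAp,x'))$, which is exactly the assertion. There is no genuine obstacle here; the only points demanding a moment's care are the distributivity of infimum over binary meet used in the first step and keeping the arguments of $\varphi^{-1}$ in the correct order when invoking Lemma~\ref{lemma: FJHSS} the second time.
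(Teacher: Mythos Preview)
Your proof is correct and follows exactly the route indicated in the paper, which states that the lemma follows directly from Lemma~\ref{lemma: FJHSS} and the second assertion of Theorem~\ref{theorem: HFLKA}. You have simply spelled out the details---the identity $E(f,g)=S(f,g)\land S(g,f)$ and the two applications of Lemma~\ref{lemma: FJHSS}---that the paper leaves implicit.
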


The following theorem states that, if $\varphi$ is a fuzzy bisimulation between $\mA$ and $\mAp$, then the fuzzy degree in which the fuzzy language recognized by $\mA$ is equal to the fuzzy language recognized by $\mAp$ is greater than or equal to the norm of~$\varphi$. It follows directly from Theorem~\ref{theorem: HDFUI}. 

\begin{theorem}\label{theorem: HDFUI2}
If $\varphi$ is a fuzzy bisimulation between fuzzy automata $\mA$ and $\mAp$, then 
\(
	\nZbs \leq E(\bL(\mA), \bL(\mAp)).
\)
\end{theorem}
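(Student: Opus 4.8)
The plan is to derive the claim directly from Theorem~\ref{theorem: HDFUI} applied twice, once to $\varphi$ and once to $\varphi^{-1}$, together with the elementary observation that the fuzzy equality degree decomposes as a meet of two fuzzy inclusion degrees.

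First I would record that for any fuzzy sets $f, g : X \to L$ one has $E(f,g) = S(f,g) \land S(g,f)$. This is immediate from the definitions, since $E(f,g) = \bigwedge_{x \in X} (f(x) \fequiv g(x)) = \bigwedge_{x \in X} \big( (f(x) \fto g(x)) \land (g(x) \fto f(x)) \big)$ and a meet of componentwise meets can be reassociated. In particular, $E(\bL(\mA), \bL(\mAp)) = S(\bL(\mA), \bL(\mAp)) \land S(\bL(\mAp), \bL(\mA))$.

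Next, since $\varphi$ is a fuzzy bisimulation between $\mA$ and $\mAp$, by definition $\varphi$ is a fuzzy simulation between $\mA$ and $\mAp$ and $\varphi^{-1}$ is a fuzzy simulation between $\mAp$ and $\mA$. Applying Theorem~\ref{theorem: HDFUI} to $\varphi$ gives $\normS{\varphi}{\mA}{\mAp} \leq S(\bL(\mA), \bL(\mAp))$, and applying it to $\varphi^{-1}$ (in its role as a fuzzy simulation between $\mAp$ and $\mA$) gives $\normS{\varphi^{-1}}{\mAp}{\mA} \leq S(\bL(\mAp), \bL(\mA))$. Taking the meet of these two inequalities and recalling that $\normBS{\varphi}{\mA}{\mAp} = \normS{\varphi}{\mA}{\mAp} \land \normS{\varphi^{-1}}{\mAp}{\mA}$ by the definition of the norm of a fuzzy bisimulation, I obtain $\nZbs \leq S(\bL(\mA), \bL(\mAp)) \land S(\bL(\mAp), \bL(\mA)) = E(\bL(\mA), \bL(\mAp))$, as required.

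There is essentially no obstacle here: all the substance sits in Theorem~\ref{theorem: HDFUI} (and, transitively, in Lemma~\ref{lemma: FJHSS}). The only points needing minimal care are the symmetric roles of $\mA$ and $\mAp$ — ensuring Theorem~\ref{theorem: HDFUI} is invoked with the correct orientation when applied to $\varphi^{-1}$ — and the trivial reassociation in the step $E = S \land S$. This is consistent with the text's remark that the statement ``follows directly from Theorem~\ref{theorem: HDFUI}.''
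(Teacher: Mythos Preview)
Your proof is correct and is precisely the argument the paper has in mind when it says the statement ``follows directly from Theorem~\ref{theorem: HDFUI}'': apply Theorem~\ref{theorem: HDFUI} to $\varphi$ and to $\varphi^{-1}$, take the meet, and use $E(f,g) = S(f,g) \land S(g,f)$ together with the definition of $\nZbs$.
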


\subsection{The Hennessy-Milner Property}

In this subsection, we present the Hennessy-Milner property of fuzzy bisimulations between fuzzy automata. It is a logical characterization of the greatest fuzzy bisimulation between two fuzzy automata under some assumptions.  

We define the set $\mFbs$ of formulas in a similar way as for $\mFs$, but using expressions of the form $(a \leftrightarrow w)$ instead of $(a \to w)$. In particular, $\mFbs$ is the smallest set of formulas such that:
\begin{itemize}
	\item $\tau \in \mFbs$; 
	\item if $s \in \Sigma$ and $w \in \mFbs$, then $(s \circ w) \in \mFbs$;
	\item if $a \in L$ and $w \in \mFbs$, then $(a \leftrightarrow w) \in \mFbs$;
	\item if $w_1, w_2 \in \mFbs$, then $(w_1 \land w_2) \in \mFbs$.
\end{itemize}

The value $w^\mA(x)$ for $w \in \mFbs \setminus \{\tau\}$ and $x \in A$ is defined analogously as for the case where $w \in \mFs \setminus \{\tau\}$, except that $(a \leftrightarrow w)^\mA(x)$ is defined to be $(a \fequiv w^\mA(x))$, i.e., the fuzzy degree in which $w^\mA(x)$ is equal to~$a$.  

The following lemma is a counterpart of Lemma~\ref{lemma: FJHSSx} and a generalization of Lemma~\ref{lemma: FJHSS2}, as it implies that, if $\varphi$ is a fuzzy bisimulation between fuzzy automata $\mA$ and $\mAp$, then for every $x \in A$ and $x' \in A'$:
\begin{equation}\label{eq: HDJHS2}
\varphi(x,x') \leq \bigwedge_{w \in \mFbs}\! (w^\mA(x) \fequiv w^\mAp(x')).
\end{equation}
This inequality states that the formulas of $\mFbs$ are fuzzily invariant under fuzzy bisimulations. 

\begin{lemma}\label{lemma: FJHSS2x}
If $\varphi$ is a fuzzy bisimulation between fuzzy automata $\mA$ and $\mAp$, then for every $w \in \mFbs$: 
\begin{eqnarray}
\varphi^{-1} \circ w^\mA & \leq & w^\mAp \label{eq: HFDHA 1} \\
\varphi \circ w^\mAp & \leq & w^\mA. \label{eq: HFDHA 2} 
\end{eqnarray}
\end{lemma}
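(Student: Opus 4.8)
The plan is to prove the lemma by induction on the structure of $w \in \mFbs$, establishing the two inequalities \eqref{eq: HFDHA 1} and \eqref{eq: HFDHA 2} simultaneously; carrying them together is essential, because the inductive step for one of them will invoke the inductive hypothesis for the other. Throughout we use that $\varphi$ being a fuzzy bisimulation means it satisfies \eqref{eq: HFHAJ 2}, \eqref{eq: HFHAJ 3}, \eqref{eq: HFHAJ 5} and \eqref{eq: HFHAJ 6}, and that the composition operator $\circ$ is monotone in each argument (a consequence of \eqref{fop: GDJSK 10} and the definition of $\circ$).

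The cases $w = \tau$, $w = (s \circ u)$ and $w = (w_1 \land w_2)$ will be routine and will mirror the corresponding cases in the proof of Lemma~\ref{lemma: FJHSSx}. For $w = \tau$, \eqref{eq: HFDHA 1} is exactly \eqref{eq: HFHAJ 3} and \eqref{eq: HFDHA 2} is \eqref{eq: HFHAJ 6}. For $w = (s \circ u)$, I would write $\varphi^{-1} \circ (s \circ u)^\mA = \varphi^{-1} \circ \deltaA_s \circ u^\mA \leq \deltaAp_s \circ \varphi^{-1} \circ u^\mA \leq \deltaAp_s \circ u^\mAp = (s \circ u)^\mAp$, using \eqref{eq: HFHAJ 2}, monotonicity of $\circ$, and the induction hypothesis for \eqref{eq: HFDHA 1}; the inequality \eqref{eq: HFDHA 2} is symmetric via \eqref{eq: HFHAJ 5} and the induction hypothesis for \eqref{eq: HFDHA 2}. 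For $w = (w_1 \land w_2)$, I would use $\varphi^{-1} \circ (w_1^\mA \land w_2^\mA) \leq (\varphi^{-1} \circ w_1^\mA) \land (\varphi^{-1} \circ w_2^\mA) \leq w_1^\mAp \land w_2^\mAp$, and symmetrically for \eqref{eq: HFDHA 2}.

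The one genuinely new case, and the hard part, is $w = (a \leftrightarrow u)$, where $(a \leftrightarrow u)^\mB(z) = a \fequiv u^\mB(z)$; since $\fequiv$ is not monotone, the induction hypothesis cannot simply be pushed through, and this is where both halves of the statement and the congruence inequality \eqref{fop: GDJSK 150b} are needed. The plan is first to extract a pointwise $\fequiv$-bound from the induction hypothesis: for every $\tuple{x,x'} \in A \times A'$, the hypotheses \eqref{eq: HFDHA 1} and \eqref{eq: HFDHA 2} applied to $u$ give $\varphi(x,x') \fand u^\mA(x) \leq u^\mAp(x')$ and $\varphi(x,x') \fand u^\mAp(x') \leq u^\mA(x)$, so by the residuation property \eqref{fop: GDJSK 00} we get $\varphi(x,x') \leq (u^\mA(x) \fto u^\mAp(x'))$ and $\varphi(x,x') \leq (u^\mAp(x') \fto u^\mA(x))$, hence $\varphi(x,x') \leq (u^\mA(x) \fequiv u^\mAp(x'))$. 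Then, invoking \eqref{fop: GDJSK 150b} with $c := a$, we obtain $(u^\mA(x) \fequiv u^\mAp(x')) \leq (a \fequiv u^\mA(x)) \fequiv (a \fequiv u^\mAp(x'))$, so $\varphi(x,x') \leq (a \fequiv u^\mA(x)) \fequiv (a \fequiv u^\mAp(x'))$; in particular $\varphi(x,x') \fand (a \fequiv u^\mA(x)) \leq (a \fequiv u^\mAp(x'))$, and taking the supremum over $x \in A$ yields \eqref{eq: HFDHA 1}. Because $\fequiv$ is symmetric, the same argument with $\mA$ and $\mAp$ interchanged gives \eqref{eq: HFDHA 2}.

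In summary, the only real subtlety lies in the biconditional case; the two crucial ingredients there are (i) having proved both inclusions together, so that the induction hypothesis delivers the full bound $\varphi(x,x') \leq (u^\mA(x) \fequiv u^\mAp(x'))$ rather than just one direction, and (ii) the congruence-type inequality \eqref{fop: GDJSK 150b}, which lets one transport that bound through the outer operation $a \fequiv (\cdot)$. The remaining cases require only the same routine manipulations as in Lemma~\ref{lemma: FJHSSx}.
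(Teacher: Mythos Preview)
Your proposal is correct and follows essentially the same approach as the paper's own proof: both argue by induction on the structure of $w$, defer the cases $\tau$, $(s \circ u)$ and $(w_1 \land w_2)$ to the argument of Lemma~\ref{lemma: FJHSSx}, and handle the key case $w = (a \leftrightarrow u)$ pointwise by first deriving $\varphi(x,x') \leq (u^\mA(x) \fequiv u^\mAp(x'))$ from the joint induction hypothesis and then applying~\eqref{fop: GDJSK 150b}. Your write-up is slightly more explicit in spelling out the residuation step and the supremum over $x$, but the strategy and the key ingredients are identical.
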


\newcommand{\ProofLemmaFJHSStx}{Let $\varphi$ be a fuzzy bisimulation between $\mA$ and $\mAp$. 
We prove the lemma by induction on the structure of~$w$. 
The cases where $w$ is of the form $\tau$, $(s \circ u)$ or $(w_1 \land w_2)$ can be dealt with analogously as done in the proof of Lemma~\ref{lemma: FJHSSx}. Consider the case $w = (a \leftrightarrow u)$. Let $\tuple{x,x'}$ be an arbitrary pair from $A \times A'$. 
By the induction assumption (i.e., \eqref{eq: HFDHA 1} and~\eqref{eq: HFDHA 2} with $w$ replaced by $u$), 
\[ \varphi(x,x') \leq (u^\mA(x) \fequiv u^\mAp(x')). \]
By~\eqref{fop: GDJSK 150b}, it follows that 
\[ \varphi(x,x') \leq ((a \fequiv u^\mA(x)) \fequiv (a \fequiv u^\mAp(x'))), \]
which means
\[ \varphi(x,x') \leq (w^\mA(x) \fequiv w^\mAp(x')). \]
As this holds for all $\tuple{x,x'} \in A \times A'$, we can derive \eqref{eq: HFDHA 1} and~\eqref{eq: HFDHA 2}.
}

\begin{proof}
\ProofLemmaFJHSStx
\myend
\end{proof}

The following theorem is about the Hennessy-Milner property of fuzzy bisimulations between fuzzy automata. 
It is a counterpart of Theorem~\ref{theorem: HGDJA} and is proved analogously. 

\begin{theorem}\label{theorem: HGDJAt}
Suppose that $\mL$ is linear and $\fand$ is continuous. Let $\mA$ and $\mAp$ be image-finite fuzzy automata. Let \mbox{$\varphi: A \times A' \to L$} be the fuzzy relation defined as follows:
\[
	\varphi(x,x') = \bigwedge_{w \in \mFbs}\! (w^\mA(x) \fequiv w^\mAp(x')).
\]
Then, $\varphi$ is the greatest fuzzy bisimulation between $\mA$ and $\mAp$. 
\end{theorem}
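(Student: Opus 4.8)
The plan is to mirror the proof of Theorem~\ref{theorem: HGDJA}, adapting it to the symmetric (bisimulation) setting. By the consequence~\eqref{eq: HDJHS2} of Lemma~\ref{lemma: FJHSS2x}, every fuzzy bisimulation $\psi$ between $\mA$ and $\mAp$ satisfies $\psi \leq \varphi$, so it suffices to prove that $\varphi$ itself is a fuzzy bisimulation between $\mA$ and $\mAp$, i.e.\ that $\varphi$ satisfies \eqref{eq: HFHAJ 2}, \eqref{eq: HFHAJ 3}, \eqref{eq: HFHAJ 5} and \eqref{eq: HFHAJ 6}. Since the definition of $\varphi$ here is symmetric in $\mA$ and $\mAp$ (note $\varphi^{-1}(x',x) = \bigwedge_{w}(w^\mAp(x') \fequiv w^\mA(x))$ has the same form), it is enough to establish \eqref{eq: HFHAJ 2} and \eqref{eq: HFHAJ 3}; conditions \eqref{eq: HFHAJ 5} and \eqref{eq: HFHAJ 6} then follow by the same argument with the roles of $\mA$ and $\mAp$ swapped, using that $\mA$ is also image-finite.

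First I would dispatch \eqref{eq: HFHAJ 3}: from $\tau \in \mFbs$ we get $\varphi(x,x') \leq (\tau^\mA(x) \fequiv \tau^\mAp(x')) \leq (\tau^\mA(x) \fto \tau^\mAp(x'))$ using that $a \fequiv b \leq a \fto b$ (a consequence of the definition of $\fequiv$ as a meet), and hence $\varphi(x,x') \fand \tauA(x) \leq \tauAp(x')$ by~\eqref{fop: GDJSK 00}. For \eqref{eq: HFHAJ 2}, the strategy is exactly as in Theorem~\ref{theorem: HGDJA}: fix $s$, and show that for every $\tuple{x',y} \in A' \times A$ and $x \in A$ there is $y' \in A'$ with $\varphi(x,x') \fand \deltaA_s(x,y) \leq \deltaAp_s(x',y') \fand \varphi(y,y')$. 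Assuming the contrary, continuity of $\fand$ supplies for each of the finitely many $y'_1,\ldots,y'_n \in A'$ with $\deltaAp_s(x',y'_i) > 0$ a formula $w_{y'_i} \in \mFbs$ witnessing the strict inequality against $\deltaAp_s(x',y'_i) \fand (w_{y'_i}^\mA(y) \fequiv w_{y'_i}^\mAp(y'_i))$. Here is the one place the argument genuinely differs from the simulation case: in Theorem~\ref{theorem: HGDJA} one normalizes each witness to $u_{y'_i} = (w_{y'_i}^\mA(y) \to w_{y'_i})$ so that $u_{y'_i}^\mA(y) = 1$; now I would instead take $u_{y'_i} = (w_{y'_i}^\mA(y) \leftrightarrow w_{y'_i})$, so that $u_{y'_i}^\mA(y) = (w_{y'_i}^\mA(y) \fequiv w_{y'_i}^\mA(y)) = 1$ again, while $u_{y'_i}^\mAp(y'_i) = (w_{y'_i}^\mA(y) \fequiv w_{y'_i}^\mAp(y'_i))$ still dominates the relevant term. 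Then, using linearity of $\mL$ to pass from the finite family of strict inequalities to a single one, setting $w = s \circ (u_{y'_1} \land \cdots \land u_{y'_n})$, and estimating $w^\mA(x) \geq \deltaA_s(x,y)$ and $w^\mAp(x') \leq \bigvee_i (\deltaAp_s(x',y'_i) \fand u_{y'_i}^\mAp(y'_i))$ via~\eqref{fop: GDJSK 10} and~\eqref{fop: GDJSK 40}, I obtain $\varphi(x,x') \fand w^\mA(x) > w^\mAp(x')$, hence $\varphi(x,x') > (w^\mA(x) \fequiv w^\mAp(x'))$ (again using $a \fequiv b \leq a \fto b$ together with $\varphi(x,x') \fand w^\mA(x) > w^\mAp(x')$), contradicting the definition of~$\varphi$.

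The main obstacle, as in Theorem~\ref{theorem: HGDJA}, is the careful handling of the image-finiteness and continuity hypotheses in the contradiction argument, and in particular checking that replacing $(a \to w)$ by $(a \leftrightarrow w)$ as the normalizing construct still yields $u_{y'_i}^\mA(y) = 1$ and preserves the needed bound on $u_{y'_i}^\mAp(y'_i)$ — this is where property~\eqref{fop: GDJSK 150b} and the meet-definition of $\fequiv$ do the work, so that the step from $\varphi(x,x') \leq (u^\mA(x) \fequiv u^\mAp(x'))$ in Lemma~\ref{lemma: FJHSS2x} to the corresponding bound for $(a \leftrightarrow u)$ goes through. Once that is settled, the derivation of \eqref{eq: HFHAJ 5} and \eqref{eq: HFHAJ 6} from symmetry, and the conclusion that $\varphi$ is the greatest fuzzy bisimulation, are routine.
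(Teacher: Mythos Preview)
Your proposal is correct and follows essentially the same route as the paper's own proof: reduce to showing $\varphi$ is a fuzzy bisimulation via Lemma~\ref{lemma: FJHSS2x}, handle the $\tau$-conditions directly, and for~\eqref{eq: HFHAJ 2} argue by contradiction using continuity of $\fand$, image-finiteness of $\mAp$, the normalized witnesses $u_{y'_i} = (w_{y'_i}^\mA(y) \leftrightarrow w_{y'_i})$, and linearity of $\mL$. The only cosmetic differences are that the paper treats \eqref{eq: HFHAJ 3} and \eqref{eq: HFHAJ 6} together (rather than appealing to symmetry) and phrases the final contradiction as $\varphi(x,x') > (w^\mA(x) \fto w^\mAp(x'))$ rather than passing through $\fequiv$; neither affects the argument.
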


\begin{proof}
By the consequence~\eqref{eq: HDJHS2} of Lemma~\ref{lemma: FJHSS2x}, it suffices to prove that $\varphi$ is a fuzzy bisimulation between $\mA$ and $\mAp$. 
By definition, for every $\tuple{x,x'} \in A \times A'$, 
\[
	\varphi(x,x') \leq (\tau^\mA(x) \fequiv \tau^\mAp(x')), 
\]
which implies 
\begin{eqnarray*}
\varphi(x,x') \fand \tauA(x) & \leq & \tauAp(x') \\
\varphi(x,x') \fand \tauAp(x') & \leq & \tauA(x).
\end{eqnarray*}
Therefore, \eqref{eq: HFHAJ 3} and~\eqref{eq: HFHAJ 6} hold. 

Consider~\eqref{eq: HFHAJ 2}. To prove it, it suffices to show that, for every $\tuple{x',y} \in A' \times A$ and $x \in A$, there exists $y' \in A'$ such that  
\[ \varphi(x,x') \fand \deltaA_s(x,y) \leq \deltaAp_s(x',y') \fand \varphi(y,y'). \]
For a contradiction, suppose that there exist $\tuple{x',y} \in A' \times A$ and $x \in A$ such that, for every $y' \in A'$,   
\[ \varphi(x,x') \fand \deltaA_s(x,y) > \deltaAp_s(x',y') \fand \varphi(y,y'). \] 
Since $\fand$ is continuous, it follows that, for every $y' \in A'$, there exists $w_{y'} \in \mFbs$ such that  
\[ \varphi(x,x') \fand \deltaA_s(x,y) > \deltaAp_s(x',y') \fand (w_{y'}^\mA(y) \fequiv w_{y'}^\mAp(y')). \] 
Let $y'_1,\ldots,y'_n$ be all elements of $A'$ such that $\deltaAp_s(x',y') > 0$ (we use here the assumption that $\mAp$ is image-finite). For $1 \leq i \leq n$, let $u_{y'_i} = (w_{y'_i}^\mA(y) \leftrightarrow w_{y'_i})$. We have that, for every $1 \leq i \leq n$, $u_{y'_i}^\mA(y) = 1$ (by~\eqref{fop: GDJSK 30}) and 
\[ \varphi(x,x') \fand \deltaA_s(x,y) > \deltaAp_s(x',y'_i) \fand u_{y'_i}^\mAp(y'_i). \] 
Since $\mL$ is linear, it follows that 
\begin{equation}\label{eq: HDJAA2}
\varphi(x,x') \fand \deltaA_s(x,y) > \bigvee_{1 \leq i \leq n}\!(\deltaAp_s(x',y'_i) \fand u_{y'_i}^\mAp(y'_i)).
\end{equation}
Let $w = s \circ (u_{y'_1} \land \ldots \land u_{y'_n})$. Thus, by~\eqref{fop: GDJSK 10} and~\eqref{fop: GDJSK 40}, 
\begin{eqnarray*}
w^\mA(x) & \geq & \deltaA_s(x,y), \\
w^\mAp(x') & \leq & \bigvee_{1 \leq i \leq n}\!(\deltaAp_s(x',y'_i) \fand u_{y'_i}^\mAp(y'_i)).
\end{eqnarray*}
By~\eqref{eq: HDJAA2} and~\eqref{fop: GDJSK 10}, it follows that 
\[ \varphi(x,x') \fand w^\mA(x) > w^\mAp(x'), \]
which is equivalent to 
\[ \varphi(x,x') > (w^\mA(x) \fto w^\mAp(x')). \]
This contradicts the definition of~$\varphi$. 
	
The assertion~\eqref{eq: HFHAJ 5} can be proved analogously. 
\myend
\end{proof}

\section{Related Work}
\label{sec: related work}

Given two fuzzy systems of the same kind such as fuzzy automata, fuzzy LTSs, fuzzy/weighted social networks, fuzzy Kripke models or fuzzy interpretations in a description logic, one can define a simulation or bisimulation between them either as a crisp relation or as a fuzzy relation between the sets of states (respectively, actors or individuals) of the systems. 
Bisimulations as crisp relations have been studied for fuzzy LTSs~\cite{CaoCK11,CaoSWC13,DBLP:journals/fss/WuD16,DBLP:journals/ijar/WuCHC18,DBLP:journals/fss/WuCBD18}, many-valued/fuzzy modal logics~\cite{EleftheriouKN12,Fan15,fuin/Diaconescu20}, weighted/fuzzy automata~\cite{DamljanovicCI14,DBLP:journals/fss/YangL20} and fuzzy description logics~\cite{FSS2020}. 
Simulations as crisp relations have been studied for fuzzy LTSs~\cite{DBLP:journals/ijar/PanLC15,DBLP:journals/fss/WuD16,fuzzIEEE/NguyenN21}, weighted/fuzzy automata~\cite{DamljanovicCI14,DBLP:journals/fss/YangL20} and fuzzy description logics \cite{Nguyen-TFS2019}. 
Bisimulations as fuzzy relations have been studied for weighted/fuzzy automata~\cite{DBLP:journals/tcs/Buchholz08,CiricIDB12,SMC.20}, fuzzy modal logics~\cite{Fan15,NguyenFSS2021}, weighted/fuzzy social networks~\cite{ai/FanL14,IgnjatovicCS15} and fuzzy description logics~\cite{FSS2020}. 
Simulations as fuzzy relations have been studied for fuzzy automata~\cite{CiricIDB12,SMC.20}, fuzzy LTSs~\cite{DBLP:journals/ijar/PanC0C14,DBLP:journals/ijar/PanLC15,ijar/Nguyen21} and fuzzy social networks~\cite{IgnjatovicCS15}. 

Notable works on simulations and bisimulations which are defined as fuzzy relations for fuzzy/weighted automata are \cite{CiricIDB12,SMC.20,DBLP:journals/tcs/Buchholz08}. The works~\cite{CiricIDB12,SMC.20} have been discussed in the introduction. In~\cite{DBLP:journals/tcs/Buchholz08} Buchholz introduced and studied bisimulations between weighted automata over a semiring. There is a similarity between the approaches of defining bisimulations in~\cite{CiricIDB12} and~\cite{DBLP:journals/tcs/Buchholz08}, but the settings based on residuated lattices~\cite{CiricIDB12} and semirings~\cite{DBLP:journals/tcs/Buchholz08} are substantially different. 

Logical characterizations of fuzzy bisimulations in fuzzy modal and description logics have been studied in~\cite{Fan15,FSS2020,NguyenFSS2021}. Logical characterizations of fuzzy simulations between fuzzy LTSs have been studied in~\cite{DBLP:journals/ijar/PanC0C14,DBLP:journals/ijar/PanLC15,ijar/Nguyen21}. 

Algorithms for computing the greatest simulation or bisimulation, which is defined as a fuzzy relation, between two finite fuzzy automata or two finite fuzzy interpretations in a description logic have been provided in~\cite{CiricIJD12,SMC.20,TFS2020}. The algorithms given in \cite{TFS2020} can be adapted to obtain algorithms with the complexity $O((m+n)n)$ for computing the greatest fuzzy simulation or bisimulation between two finite fuzzy automata, where $n$ is the number of states and $m$ is the number of non-zero transitions in the automata. 

\section{Conclusions}
\label{sec: conc}

We have introduced the notions of fuzzy simulation and bisimulation for fuzzy automata over a residuated lattice. Technically, our notion of fuzzy simulation (respectively, bisimulation) is obtained from the notion of (forward) simulation (respectively, bisimulation) introduced by {\'C}iri{\'c} et al.~\cite{CiricIDB12} for fuzzy automata by removing some conditions. Conceptually, however, this gives completely new notions which differ from the ones of~\cite{CiricIDB12} substantially. The most important difference is that (forward) simulations and bisimulations between two fuzzy automata act as a crisp relationship \cite[Theorem~5.3]{CiricIDB12}, while fuzzy simulations and bisimulations between two fuzzy automata act as a fuzzy relationship (Theorems~\ref{theorem: HDFUI} and~\ref{theorem: HDFUI2}). 
Our notions of fuzzy simulation and bisimulation for fuzzy automata are more general than the notions of (forward) simulation and bisimulation~\cite{CiricIDB12}, as every forward simulation (respectively, bisimulation) between two fuzzy automata is also a fuzzy simulation (respectively, bisimulation) between them, but not vice versa. 

As discussed in the introduction, our notions of fuzzy simulation and bisimulation for fuzzy automata are also more refined than the notions of approximate (forward) simulation and bisimulation introduced by Stanimirovi{\' c} et al.~\cite{SMC.20}. 

We have introduced the norms of a fuzzy simulation or bisimulation between two fuzzy automata. These notions are essential for expressing properties and characterizations of fuzzy simulations and bisimulations (see Theorems~\ref{theorem: HFHSA}, \ref{theorem: JHFHA}, \ref{theorem: GHFSA}, \ref{theorem: HDFUI}, \ref{theorem: HFLKA}, \ref{theorem: JHFHA2}, \ref{theorem: GHFSAt}, \ref{theorem: HDFUI2} and Corollary~\ref{cor: HDHGS}). 
We have proved that the fuzzy language recognized by a fuzzy automaton is fuzzily preserved by fuzzy simulations (Theorem~\ref{theorem: HDFUI}) and fuzzily invariant under fuzzy bisimulations (Theorem~\ref{theorem: HDFUI2}). We have also proved the Hennessy-Milner properties of fuzzy simulations and bisimulations (Theorems~\ref{theorem: HGDJA} and~\ref{theorem: HGDJAt}). They are a logical characterization of the greatest fuzzy simulation or bisimulation between two fuzzy automata. 



\begin{thebibliography}{10}
	
	\bibitem{BRV2001}
	P.~Blackburn, M.~de~Rijke, and Y.~Venema.
	\newblock {\em Modal Logic}.
	\newblock Number~53 in Cambridge Tracts in Theoretical Computer Science.
	Cambridge University Press, 2001.
	
	\bibitem{DBLP:journals/tcs/Buchholz08}
	P.~Buchholz.
	\newblock Bisimulation relations for weighted automata.
	\newblock {\em Theor. Comput. Sci.}, 393(1-3):109--123, 2008.
	
	\bibitem{Belohlavek2002}
	R.~B\v{e}lohl{\'a}vek.
	\newblock {\em Fuzzy Relational Systems: Foundations and Principles}.
	\newblock Kluwer, 2002.
	
	\bibitem{CaoCK11}
	Y.~Cao, G.~Chen, and E.E. Kerre.
	\newblock Bisimulations for fuzzy-transition systems.
	\newblock {\em {IEEE} Trans. Fuzzy Systems}, 19(3):540--552, 2011.
	
	\bibitem{CaoSWC13}
	Y.~Cao, S.X. Sun, H.~Wang, and G.~Chen.
	\newblock A behavioral distance for fuzzy-transition systems.
	\newblock {\em {IEEE} Trans. Fuzzy Systems}, 21(4):735--747, 2013.
	
	\bibitem{CiricIDB12}
	M.~{\'C}iri{\'c}, J.~Ignjatovi{\'c}, N.~Damljanovi{\'c}, and M.~Ba\u{s}ic.
	\newblock Bisimulations for fuzzy automata.
	\newblock {\em Fuzzy Sets and Systems}, 186(1):100--139, 2012.
	
	\bibitem{CiricIJD12}
	M.~{\'C}iri{\'c}, J.~Ignjatovi{\'c}, I.~Jan\u{c}i{\'c}, and N.~Damljanovi{\'c}.
	\newblock Computation of the greatest simulations and bisimulations between
	fuzzy automata.
	\newblock {\em Fuzzy Sets and Systems}, 208:22--42, 2012.
	
	\bibitem{DamljanovicCI14}
	N.~Damljanovi\'c, M.~\'Ciri\'c, and J.~Ignjatovi\'c.
	\newblock Bisimulations for weighted automata over an additively idempotent
	semiring.
	\newblock {\em Theor. Comput. Sci.}, 534:86--100, 2014.
	
	\bibitem{fuin/Diaconescu20}
	D.~Diaconescu.
	\newblock Modal equivalence and bisimilarity in many-valued modal logics with
	many-valued accessibility relations.
	\newblock {\em Fundam. Informaticae}, 173(2-3):177--189, 2020.
	
	\bibitem{EleftheriouKN12}
	P.E. Eleftheriou, C.D. Koutras, and C.~Nomikos.
	\newblock Notions of bisimulation for {Heyting}-valued modal languages.
	\newblock {\em J. Log. Comput.}, 22(2):213--235, 2012.
	
	\bibitem{Fan15}
	T.-F. Fan.
	\newblock Fuzzy bisimulation for {G\"{o}del} modal logic.
	\newblock {\em {IEEE} Trans. Fuzzy Systems}, 23(6):2387--2396, 2015.
	
	\bibitem{ai/FanL14}
	T.{-}F. Fan and C.{-}J. Liau.
	\newblock Logical characterizations of regular equivalence in weighted social
	networks.
	\newblock {\em Artif. Intell.}, 214:66--88, 2014.
	
	\bibitem{Hajek1998}
	P.~H{\'a}jek.
	\newblock {\em Metamathematics of Fuzzy Logics}.
	\newblock Kluwer Academic Publishers, 1998.
	
	\bibitem{HennessyM85}
	M.~Hennessy and R.~Milner.
	\newblock Algebraic laws for nondeterminism and concurrency.
	\newblock {\em Journal of the ACM}, 32(1):137--161, 1985.
	
	\bibitem{IgnjatovicCS15}
	J.~Ignjatovi{\'c}, M.~{\'C}iri{\'c}, and I.~Stankovi{\'c}.
	\newblock Bisimulations in fuzzy social network analysis.
	\newblock In {\em Proceedings of IFSA-EUSFLAT-15}. Atlantis Press, 2015.
	
	\bibitem{Nguyen-TFS2019}
	L.A. Nguyen.
	\newblock Bisimilarity in fuzzy description logics under the {Zadeh} semantics.
	\newblock {\em {IEEE} Trans. Fuzzy Systems}, 27(6):1151--1161, 2019.
	
	\bibitem{ijar/Nguyen21}
	L.A. Nguyen.
	\newblock Characterizing fuzzy simulations for fuzzy labeled transition systems
	in fuzzy propositional dynamic logic.
	\newblock {\em Int. J. Approx. Reason.}, 135:21--37, 2021.
	
	\bibitem{NguyenFSS2021}
	L.A. Nguyen.
	\newblock Logical characterizations of fuzzy bisimulations in fuzzy modal
	logics over residuated lattices.
	\newblock {\em Fuzzy Sets and Systems}, 431:70--93, 2022.
	
	\bibitem{FSS2020}
	L.A. Nguyen, Q.-T. Ha, N.T. Nguyen, T.H.K. Nguyen, and T.-L. Tran.
	\newblock Bisimulation and bisimilarity for fuzzy description logics under the
	{G\"odel} semantics.
	\newblock {\em Fuzzy Sets and Systems}, 388:146--178, 2020.
	
	\bibitem{fuzzIEEE/NguyenN21}
	L.A. Nguyen and N.T. Nguyen.
	\newblock Characterizing crisp simulations and crisp directed simulations
	between fuzzy labeled transition systems by using fuzzy modal logics.
	\newblock In {\em Proceedings of {FUZZ-IEEE} 2021}, pages 1--7. {IEEE}, 2021.
	
	\bibitem{LbRoughification}
	L.A. Nguyen and A.~Sza{\l}as.
	\newblock Logic-based roughification.
	\newblock In A.~Skowron and Z.~Suraj, editors, {\em Rough Sets and Intelligent
		Systems (To the Memory of Professor {Zdzis{\l}aw Pawlak}), Vol.~1}, pages
	517--543. Springer, 2012.
	
	\bibitem{TFS2020}
	L.A. {Nguyen} and D.X. {Tran}.
	\newblock Computing fuzzy bisimulations for fuzzy structures under the
	{G\"odel} semantics.
	\newblock {\em IEEE Transactions on Fuzzy Systems}, 29(7):1715--1724, 2021.
	
	\bibitem{DBLP:journals/ijar/PanC0C14}
	H.~Pan, Y.~Cao, M.~Zhang, and Y.~Chen.
	\newblock Simulation for lattice-valued doubly labeled transition systems.
	\newblock {\em Int. J. Approx. Reason.}, 55(3):797--811, 2014.
	
	\bibitem{DBLP:journals/ijar/PanLC15}
	H.~Pan, Y.~Li, and Y.~Cao.
	\newblock Lattice-valued simulations for quantitative transition systems.
	\newblock {\em Int. J. Approx. Reason.}, 56:28--42, 2015.
	
	\bibitem{Park81}
	D.M.R. Park.
	\newblock Concurrency and automata on infinite sequences.
	\newblock In Peter Deussen, editor, {\em Proceedings of the 5th GI-Conference},
	volume 104 of {\em LNCS}, pages 167--183. Springer, 1981.
	
	\bibitem{Sangiorgi09}
	D.~Sangiorgi.
	\newblock On the origins of bisimulation and coinduction.
	\newblock {\em {ACM} Trans. Program. Lang. Syst.}, 31(4):15:1--15:41, 2009.
	
	\bibitem{SMC.20}
	S.~Stanimirovi{\'c}, I.~Mici{\'c}, and M.~{\'C}iri{\'c}.
	\newblock Approximate bisimulations for fuzzy automata over complete {Heyting}
	algebras.
	\newblock {\em IEEE Transactions on Fuzzy Systems}, 2020.
	
	\bibitem{vBenthem76}
	J.~van Benthem.
	\newblock {\em Modal Correspondence Theory}.
	\newblock PhD thesis, Mathematisch Instituut \& Instituut voor
	Grondslagenonderzoek, University of Amsterdam, 1976.
	
	\bibitem{DBLP:journals/ijar/WuCHC18}
	H.~Wu, T.~Chen, T.~Han, and Y.~Chen.
	\newblock Bisimulations for fuzzy transition systems revisited.
	\newblock {\em Int. J. Approx. Reason.}, 99:1--11, 2018.
	
	\bibitem{DBLP:journals/fss/WuCBD18}
	H.~Wu, Y.~Chen, T.{-}M. Bu, and Y.~Deng.
	\newblock Algorithmic and logical characterizations of bisimulations for
	non-deterministic fuzzy transition systems.
	\newblock {\em Fuzzy Sets Syst.}, 333:106--123, 2018.
	
	\bibitem{DBLP:journals/fss/WuD16}
	H.~Wu and Y.~Deng.
	\newblock Logical characterizations of simulation and bisimulation for fuzzy
	transition systems.
	\newblock {\em Fuzzy Sets Syst.}, 301:19--36, 2016.
	
	\bibitem{DBLP:journals/fss/YangL20}
	C.~Yang and Y.~Li.
	\newblock Approximate bisimulations and state reduction of fuzzy automata under
	fuzzy similarity measures.
	\newblock {\em Fuzzy Sets Syst.}, 391:72--95, 2020.
	
\end{thebibliography}


\end{document}